\theoremstyle{plain}
\newtheorem{theorem}{Theorem}[section]
\newtheorem{lemma}[theorem]{Lemma}
\newtheorem{definition}[theorem]{Definition}
\newcommand{\R}{\mathbb{R}}      % for Real numbers
\newcommand{\cF}{\mathscr{F}}    %filtration
\newcommand{\p}{\mathbb{P}}
\newcommand{\JP}{\text{J}_\mathcal{P}}
\newcommand{\JW}{\text{J}_\mathcal{W}}
\newcommand{\sP}{\mathcal{P}}
\newcommand{\sW}{\mathcal{W}}
\newcommand{\argmin}{\operatorname*{arg\,min}}
\newcommand{\argmax}{\operatorname*{arg\,max}}
\newif\ifanon
\begin{document}
\onecolumn
\par\noindent
\thispagestyle{empty}

\noindent \mbox{\copyright{}} 2018 IEEE. Personal use of this material is permitted. Permission from IEEE must be obtained for all other uses, in any current or future media, including reprinting/republishing this material for advertising or promotional purposes, creating new collective works, for resale or redistribution to servers or lists, or reuse of any copyrighted component of this work in other works.
\bigskip
\par\noindent To appear in the Proceedings of ICDMW 2018.

\bigskip

\par\noindent
%replace with real reference number when available $\textbf{DOI.} 10.1109/DCC.2012.40
\bigskip
\par\noindent
%replace with real link when available http://doi.ieeecomputersociety.org/10.1109/DCC.2012.40
\newpage

\twocolumn
\bstctlcite{IEEEexample:BSTcontrol}
\title{Maximally Consistent Sampling and the\\Jaccard Index of Probability Distributions}
\ifanon
\author{Anonymous\\Anonymous}
\else
\author{
\IEEEauthorblockN{Ryan Moulton}
\IEEEauthorblockA{Google Inc.\\
ryanmoulton@gmail.com}
\and
\IEEEauthorblockN{Yunjiang Jiang}
\IEEEauthorblockA{JD.COM, Silicon Valley Research Center\\
yunjiangster@gmail.com}
}
\maketitle
\thispagestyle{plain}
\pagestyle{plain}
\begin{abstract} We introduce simple, efficient algorithms for computing a MinHash of a probability distribution, suitable for both sparse and dense data, with equivalent running times to the state of the art for both cases. The collision probability of these algorithms is a new measure of the similarity of positive vectors which we investigate in detail. We describe the sense in which this collision probability is optimal for any Locality Sensitive Hash based on sampling. We argue that this similarity measure is more useful for probability distributions than the similarity pursued by other algorithms for weighted MinHash, and is the natural generalization of the Jaccard index.
\end{abstract}
\begin{IEEEkeywords}
Locality Sensitive Hashing, Retrieval, MinHash, Jaccard index, Jensen-Shannon divergence
\end{IEEEkeywords}
\section{Introduction}
MinHashing\cite{classicminhash} is a popular Locality Sensitive Hashing algorithm for clustering and retrieval on large datasets. Its extreme simplicity and efficiency, as well as its natural pairing with MapReduce and key-value datastores, have made it a basic building block in many domains, particularly document clustering\cite{classicminhash}\cite{broder1998filtering} and graph clustering\cite{graphclustering}\cite{pregelgraphclustering}.

Given a finite set $U$ of size $n$, and a uniformly random bijection $\pi: U \to \{1,2, \ldots, n\}$, the stochastic map $X \mapsto \argmin_{x \in X} \pi(x)$ enjoys the following stability property with respect to $X$:
\begin{align*}
\Pr\left[\argmin_{x\in X} \pi(x) = \argmin_{y \in Y} \pi(y)\right] = \text{J}(X,Y).
\end{align*}
Here $X$, $Y$ are both subsets of $U$, and $\text{J}(X, Y)$ is the well-known Jaccard similarity \cite{jaccardindex} defined by
\begin{align*}
\text{J}(X,Y) = \frac{|X \cap Y|}{|X \cup Y|}.
\end{align*}

Practically, this random permutation is generated by applying some hash function to each $i \in U$ with a fixed random seed, hence ``MinHashing".

In order to hash objects other than sets, Chum et al.\cite{expweight} introduced two algorithms for incorporating weights in the computation of MinHashes. The first algorithm associates constant global weights with the set members, suitable for idf weighting. The collision probability that results is $\frac{\sum_{i \in X \cap Y} w_i}{\sum_{i \in X \cup Y} w_i}$. The second algorithm computes MinHashes of vectors of positive integers, yielding a collision probability of
\begin{align*}
\JW(x,y) &= \frac{\sum_i\min\left(x_i, y_i\right)}{\sum_i\max\left(x_i, y_i\right)} 
\end{align*}
Subsequent works\cite{wminhash}\cite{denseminhash}\cite{expected-constant} have improved the efficiency of the second algorithm and extended it to arbitrary positive weights, while still achieving $\JW$ as the collision probability.

$\JW$ is one of several generalizations of the Jaccard index to non-negative vectors. It is useful because it is monotonic with respect to the $\text{L}_1$ distance between $x$ and $y$ when they are both $\text{L}_1$ normalized, but it is unnatural in many ways for probability distributions.

If we convert sets to binary vectors $x, y$, with $x_i, y_i\!\in\!\{0,1\}$, then $\JW(x, y) = \text{J}(X,Y)$. But if we convert these vectors to probability distributions by normalizing them so that $x_i\!\in\!\left\{0,\frac{1}{|X|}\right\},\  y_i\!\in\!\left\{0, \frac{1}{|Y|}\right\}$, then $\JW(x,y)\neq\text{J}(X,Y)$ when $|X|\neq |Y|$. The correspondence breaks and it no longer generalizes the Jaccard index. Instead, for $|X| > |Y|$, 
\begin{align}
\label{jw-less-than-j}
\JW(x,y) = \frac{|X\cap Y|}{|X\setminus Y| + |X|} < \text{J}(X,Y).
\end{align}
As a consequence, switching a system from an unweighted MinHash to a MinHash based on $\JW$ will generally decrease the collision probabilities.

Furthermore, $\JW$ is insensitive to important differences between probability distributions.  It counts all differences on an element in the same linear scale regardless of the mass the distributions share on that element. For instance, $\JW\left((a,b,c,0), (a,b,0, c)\right)=\JW\left((a+c,b), (a,b+c)\right)$. This makes it a poor choice when the ultimate goal is to measure similarity using an expression based in information-theory or likelihood where having differing support typically results in the worst possible score.

For a drop-in replacement for the Jaccard index that treats its input as a probability distribution, we'd like it to have the following properties.
\begin{enumerate}
\item Scale invariant in both arguments.
\item Not lower than the Jaccard Index when applied to discrete uniform distributions.
\item Sensitive to changes in support, in a similar way to information-based measures.
\item Easily achievable as a collision probability.
\end{enumerate}
$\JW$ fails all but the last.
\begin{enumerate}
\item It isn't scale invariant, $\JW(\alpha x, y) \neq \JW(x, y)$.
\item If the vectors are normalized to make it scale invariant, the values drop below the corresponding Jaccard index (equation~\ref{jw-less-than-j}.)
\item It is insensitive to changes in support.
\item Good algorithms exist, but they are non-trivial.
\end{enumerate}

\begin{algorithm}[t]
\DontPrintSemicolon
\SetKw{KwWhere}{where}
\SetKwInOut{Input}{input}\SetKwInOut{Output}{output}
\Input{Vector $\mathbf{x}$, Seed $\mathbf{s}$}
\Output{Stable sample from $\mathbf{x}$}
\ForEach{i \KwWhere$x_i>0$}{
$e_i \leftarrow \frac{- \log\left(\text{UniformNonZeroFloat}(i, s)\right)}{x_i}$\;
}
\Return{$\argmin_i e_i$}
\caption{$\sP$-MinHash. We require only a single exponentially distributed hash per nonzero element.}
\label{p-minhash-algorithm}
\end{algorithm}

Existing work has thoroughly explored improvements to Chum et al.'s second algorithm, while leaving their first untouched. In this work we instead take their first algorithm as a starting point. We extend it to arbitrary positive vectors (rather than sets with constant global weights) and analyze the result. In doing so, we find that the collision probability is a new generalization of the Jaccard Index to positive vectors, which we here call $\JP$.
\begin{align} \label{JP_double_sum_formula}
\JP(x,y) &= \sum_{\{ i\ :\ x_i,y_i>0 \}} \frac{1}{\sum_{j} \max\left(\frac{x_j}{x_i}, \frac{y_j}{y_i}\right)}.
\end{align}
The names used here, $\JW$ and $\JP$, are chosen to reflect how each function interprets $x$ and $y$, and the conditions under which they match the original Jaccard index. $\JW$ treats a difference in magnitude the same as any other difference, so treats vectors as ``weighted sets."  $\JP$ is scale invariant, so any input is treated the same as a corresponding probability distribution.

The primary contribution of this work is to derive and analyze $\JP$, and to show that in  many situations where the objects being hashed are probability distributions, $\JP$ is a more useful collision probability than $\JW$.

We will describe the sense in which $\JP$ is an optimal collision probability for any LSH based on sampling. We will prove that if the collision probability of a sampling algorithm exceeds $\JP$ on one pair, it must sacrifice collisions on a pair that has higher $\JP$.

We will motivate $\JP$'s utility by showing experimentally that it has a tighter relationship to the Jensen-Shannon divergence than $\JW$, and is more closely centered around the Jaccard index than $\JW$. We will even show empirically that in some circumstances, it is better for retrieving documents that are similar under $\JW$ than $\JW$ itself (and consequently, sometimes better for retrieving based on $\text{L}_1$-distance.)

\section{$\sP$-MinHash and its Match Probability}

Let $h: [n] \to (0,1]$ be a pseudo-random hash mapping every element $1 \le i \le n$ to an independent uniform random value in $(0,1]$. Over a non-negative vector $x$, define
\begin{align*}
H(x) \coloneqq \argmin_{i} \frac{-\log h(i)}{x_i}
\end{align*}
For brevity, we will let $1/0 \coloneqq \infty$. Each term is an exponentially distributed random variable with rate $x_i$,
\begin{align*}
\Pr\left[\frac{-\log h(i)}{x_i} < y \right] = 1 - e^{-x_i y},
\end{align*}
so it follows that
\begin{align*}
\Pr\left[H(x) = i\right] = \frac{x_i}{\sum_i x_i}.
\end{align*}
This well known and beautiful property of exponential random variables derives from the fact that $\Pr\left[\min(x,y) > \alpha\right] = \Pr\left[x > \alpha\right]\Pr\left[y > \alpha\right]$.

\begin{figure*}
    \centering
    \begin{subfigure}[b]{0.23\textwidth}
        \centering
        \includegraphics[width=0.9\textwidth]{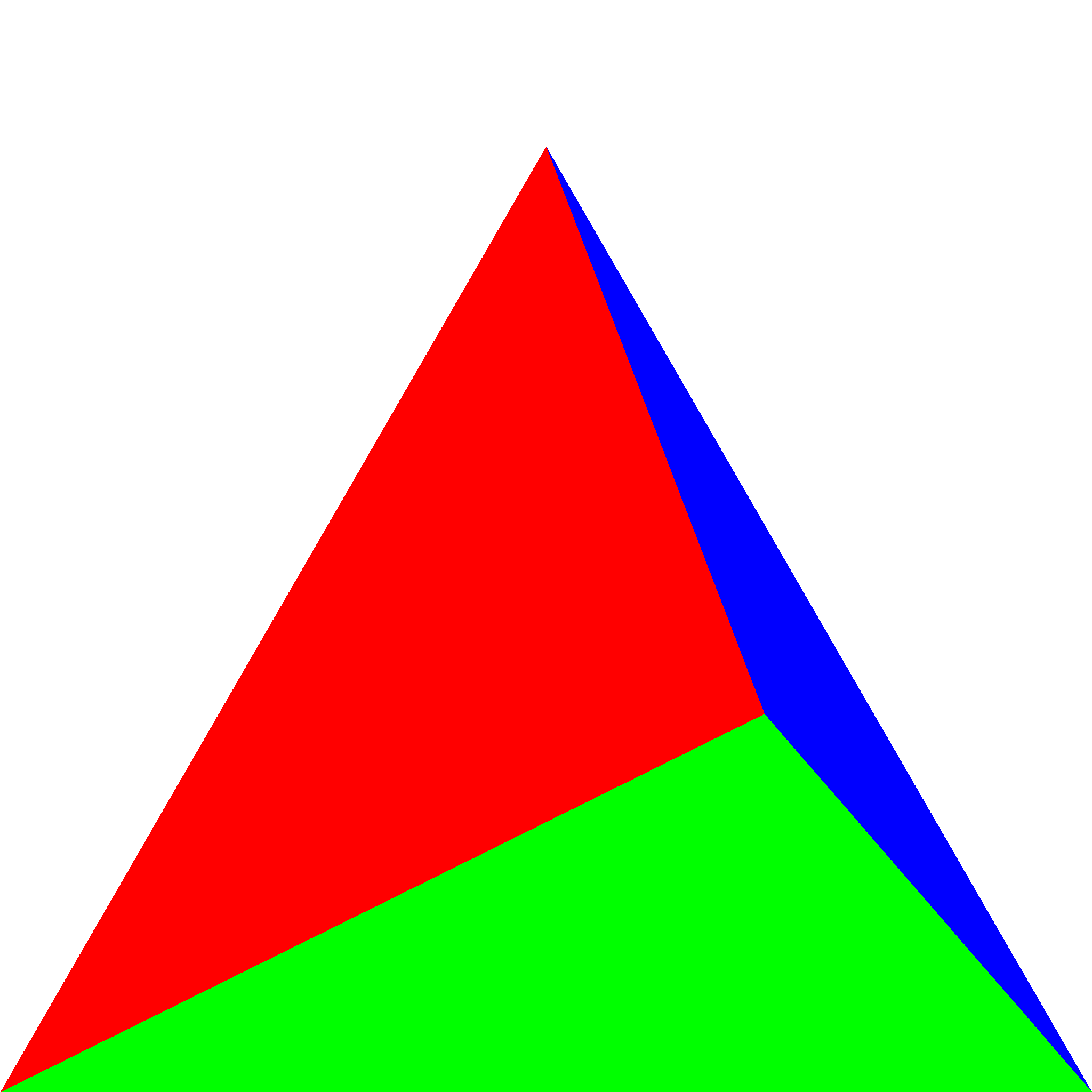}
        \caption{$\Pr\left[ H(x) = i\right]$}
    \end{subfigure}\hfill
    \begin{subfigure}[b]{0.23\textwidth}
        \centering
        \includegraphics[width=0.9\textwidth]{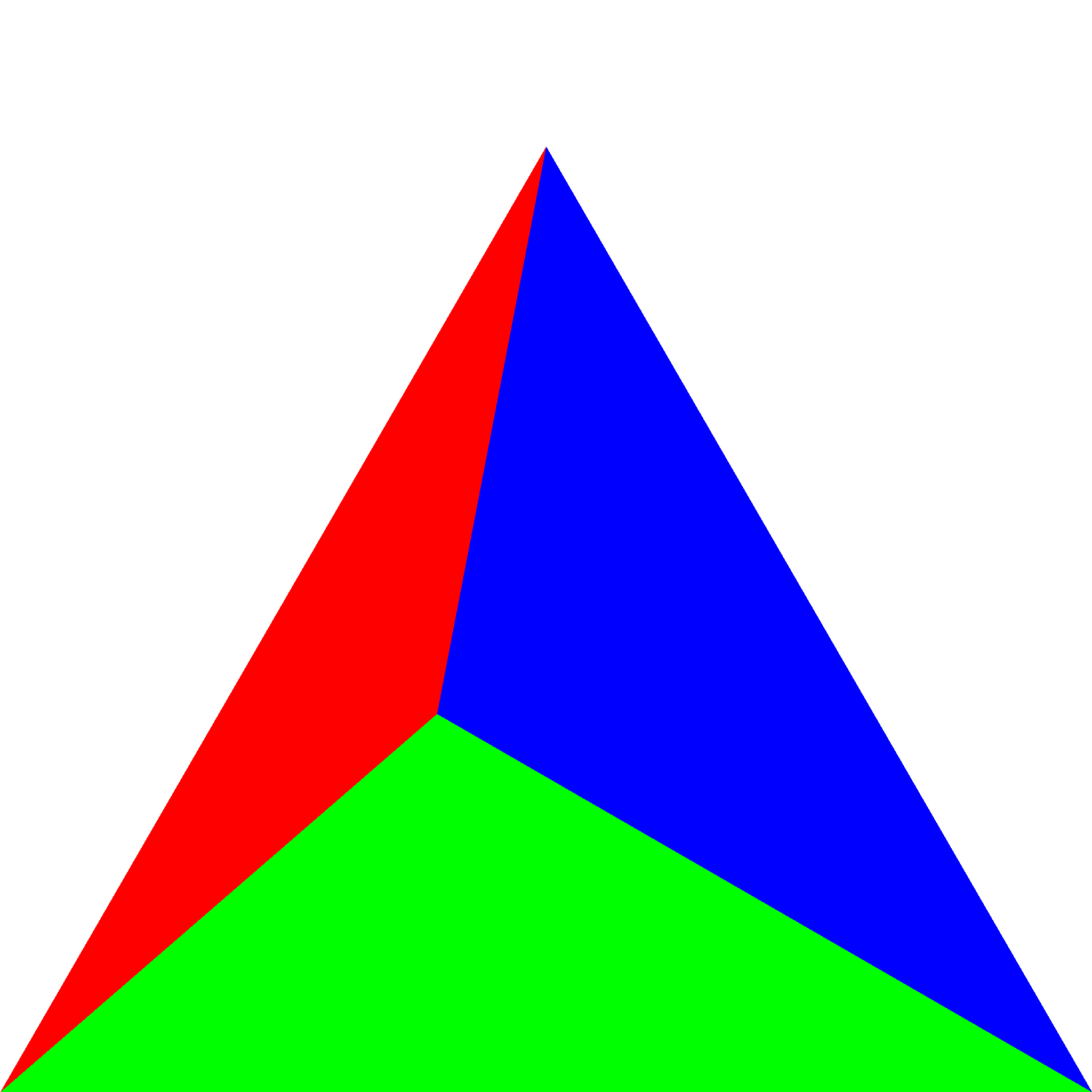}
        \caption{$\Pr\left[ H(y) = i\right]$}
    \end{subfigure}\hfill
    \begin{subfigure}[b]{0.23\textwidth}
        \centering
        \includegraphics[width=0.9\textwidth]{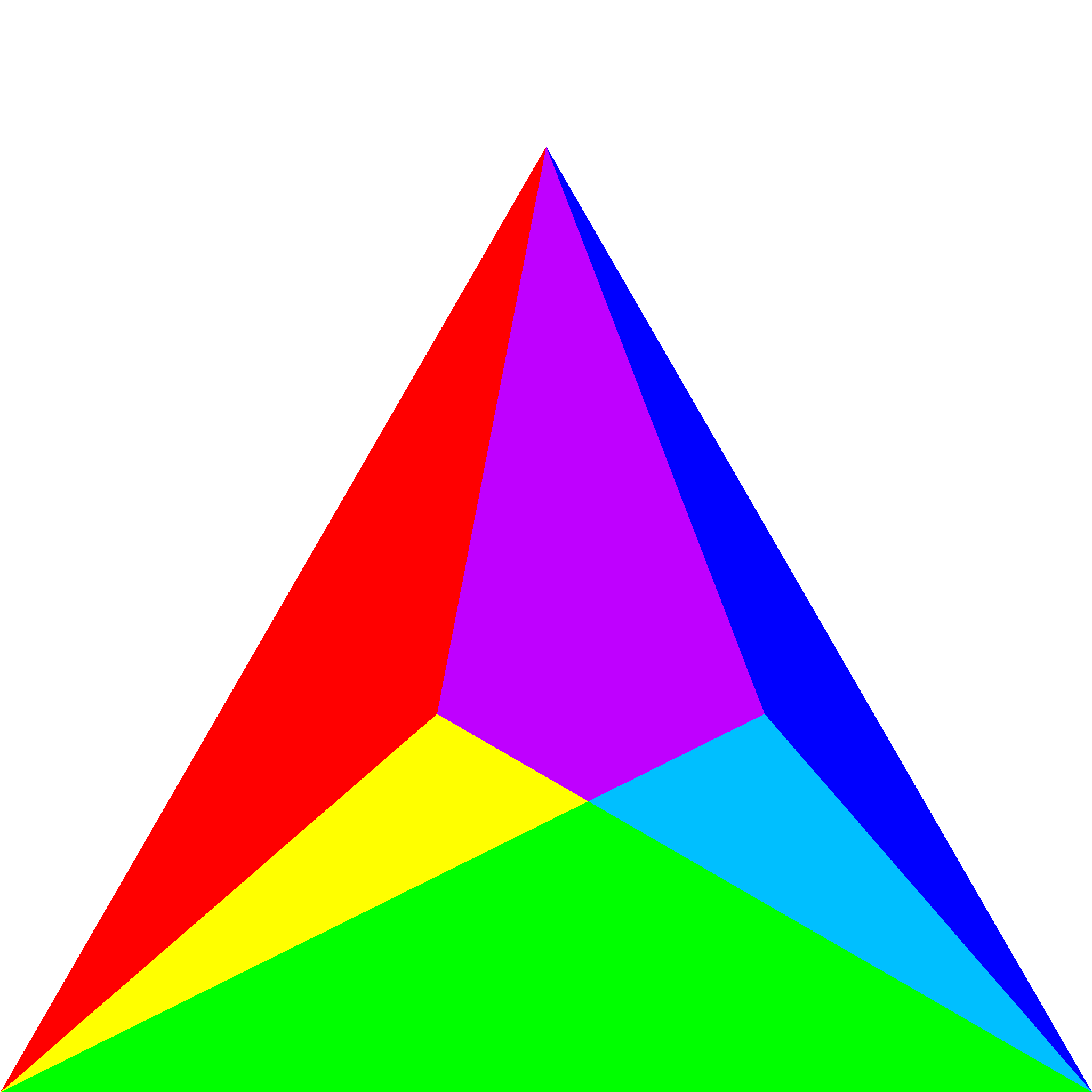}
        \caption{$\Pr\left[ H(x),H(y) = i,j\right]$}
    \end{subfigure}\hfill
    \begin{subfigure}[b]{0.23\textwidth}
        \centering
        \label{jp-simplex-area}
        \includegraphics[width=0.9\textwidth]{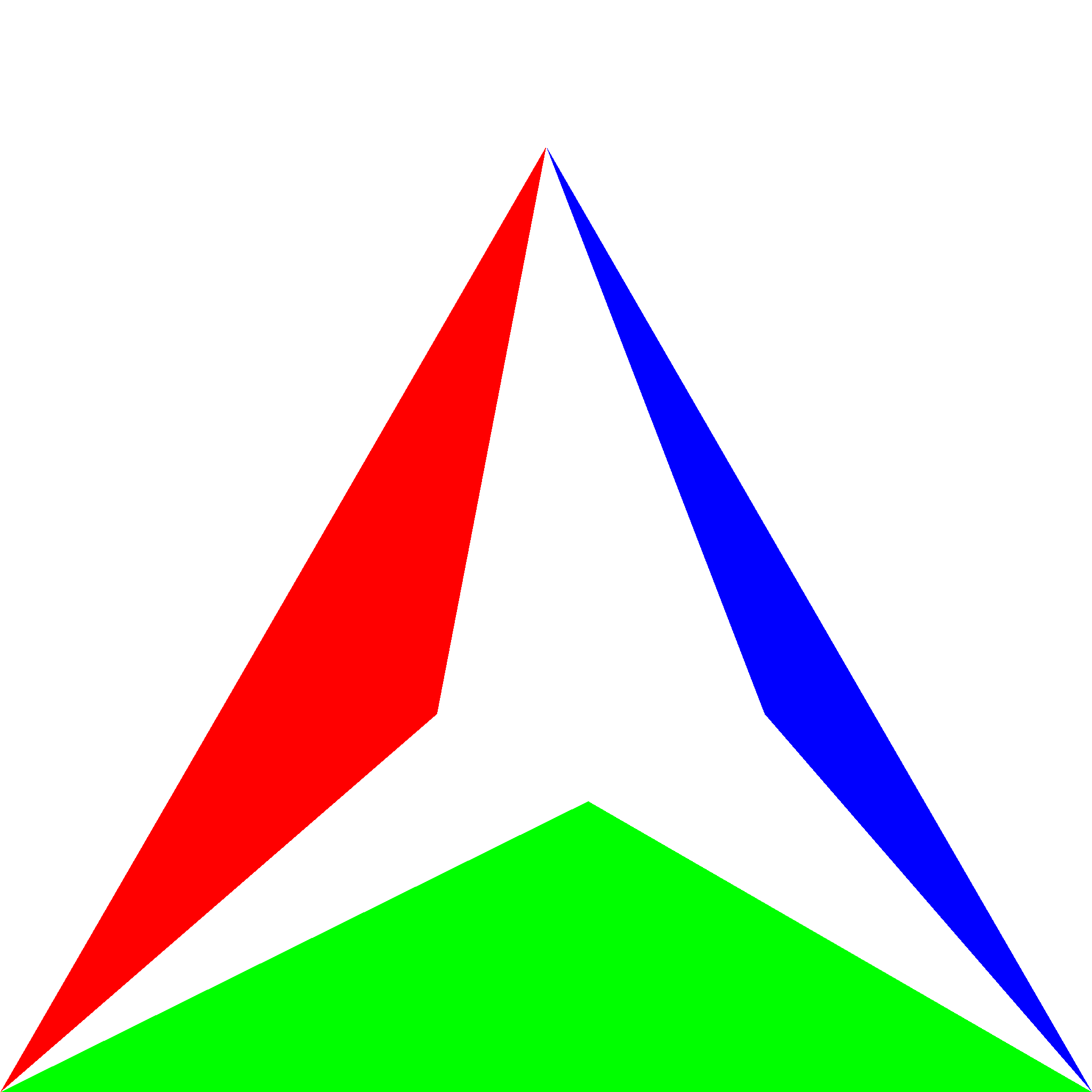}
        \caption{$\Pr\left[ H(x) = H(y) = i\right]$}
    \end{subfigure}
    \caption{The $\sP$-MinHash process can be interpreted geometrically as dividing the unit simplex into smaller simplexes proportional to the mass of each term, then selecting the same point in each simplex as its representative. Pictured here are $x=(0.5,0.4,0.1), y=(0.2,0.4,0.4)$. Colors are assigned to unique values of $i$, or $(i,j)$ in the case of the joint distributions. The sum of the remaining colored areas in (d) is proportional to $\JP(x,y)$.}
\label{simplexfigure}
\end{figure*}

\begin{theorem} \label{Collision_Probability}
For any two nonnegative vectors $x, y \in \R_+^n$,
$$\Pr\left[H(x) = H(y)\right] = \JP(x,y).$$
\end{theorem}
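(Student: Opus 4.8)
The plan is to evaluate $\Pr[H(x)=H(y)]$ by splitting on the value of the common minimizing index and then exploiting the fact that the two exponential families $\{-\log h(i)/x_i\}_i$ and $\{-\log h(i)/y_i\}_i$ are built from the \emph{same} underlying hashes $h(i)$. As a preliminary reduction, note that the minimizers defining $H(x)$ and $H(y)$ are almost surely unique (the hashed values are continuous, so ties occur with probability $0$), and that an index $i$ with $x_i = 0$ or $y_i = 0$ gives an infinite value in one of the two argmins and so cannot be the common minimizer (assuming, as we may, that $x$ and $y$ each have a positive coordinate, since otherwise both sides vanish). Hence, up to a null set, $\{H(x)=H(y)\}$ is the disjoint union over $i$ with $x_i, y_i > 0$ of the events $A_i := \{H(x) = H(y) = i\}$, and it suffices to compute $\Pr[A_i]$ and sum.

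Fix such an $i$ and write $u_j := -\log h(j)$, so that the $u_j$ are i.i.d.\ $\mathrm{Exp}(1)$, $H(x)=\argmin_j u_j/x_j$, and $H(y)=\argmin_j u_j/y_j$. The event $A_i$ says $u_i/x_i < u_j/x_j$ and $u_i/y_i < u_j/y_j$ for all $j\neq i$, i.e.\ $u_j > u_i\max(x_j/x_i, y_j/y_i)$ for all $j\neq i$. I would then condition on $u_i = t$: by independence of the remaining $u_j$ and the exponential tail $\Pr[u_j > s] = e^{-s}$ (which also covers $x_j = 0$ or $y_j = 0$, where the ratio involved is $0$),
\begin{align*}
\Pr[A_i \mid u_i = t] &= \prod_{j\neq i} e^{-t\max(x_j/x_i,\,y_j/y_i)}\\
&= \exp\Big(-t\sum_{j\neq i}\max(x_j/x_i,\,y_j/y_i)\Big).
\end{align*}
Integrating against the $\mathrm{Exp}(1)$ density of $u_i$, and using that the $j=i$ term of $\sum_j \max(x_j/x_i, y_j/y_i)$ equals $1$,
\begin{align*}
\Pr[A_i] &= \int_0^\infty e^{-t}\exp\Big(-t\sum_{j\neq i}\max(x_j/x_i,\,y_j/y_i)\Big)\,dt\\
&= \frac{1}{\sum_j \max(x_j/x_i,\,y_j/y_i)}.
\end{align*}
Summing over all $i$ with $x_i,y_i>0$ reproduces exactly the formula \eqref{JP_double_sum_formula} for $\JP(x,y)$.

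The single real obstacle is conceptual rather than computational: one must recognize that the two argmin processes are not independent but are deterministically coupled through $h$, which is precisely what collapses the joint event $A_i$ into one scalar inequality per coordinate $j\neq i$ and makes the conditioning on $u_i$ go through cleanly. The remaining items — almost-sure uniqueness of the argmins, restriction to the common support, and the degenerate-coordinate cases — are routine and each disposed of in a line. This is also the computation underlying the geometric picture in Figure~\ref{simplexfigure}(d), where $\Pr[A_i]$ is the area of the sub-simplex in which the common representative point lands.
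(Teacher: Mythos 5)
Your proof is correct and follows essentially the same route as the paper: both reduce the event $\{H(x)=H(y)=i\}$ to the single family of inequalities $u_j > u_i\max(x_j/x_i, y_j/y_i)$ and then evaluate its probability. The only difference is presentational --- you finish by integrating over the conditional law of $u_i$ directly, while the paper packages the identical computation by constructing the auxiliary vector $z^i_j \propto \max(x_j/x_i, y_j/y_i)$ and invoking $\Pr[H(z^i)=i]=z^i_i$.
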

\begin{proof}
Any monotonic transform on $\frac{-\log h(i)}{x_i}$ will not change the arg\,min, so multiplying each $x_i$ by a positive $\alpha$ won't either. $H(\alpha x) = H(x)$. Thus for $x_i,y_i>0$
\begin{align*}
\Pr\left[i = H(x) = H(y)\right] &= \Pr\left[i = H\left(\frac{x}{x_i}\right) = H\left(\frac{y}{y_i}\right)\right]
\end{align*}
By definition, $H\left(\frac{x}{x_i}\right) = H\left(\frac{y}{y_i}\right) = i$ means $\frac{-\log h(i)}{x_i / x_i} \leq \frac{-\log h(j)}{x_j/x_i}$ and $\frac{-\log h(i)}{y_i / y_i} \leq \frac{-\log h(j)}{y_j/y_i}$, for $j \neq i$, or equivalently,
\begin{align*}
-\log h(i) &\leq \min\left(\frac{-\log h(j)}{x_j/x_i}, \frac{-\log h(j)}{y_j/y_i}\right)\\
&\leq \frac{-\log h(j)}{\max\left(\frac{x_j}{x_i},\frac{y_j}{y_i}\right)}.
\end{align*}
Now we desire a new vector $z^i$ such that $H(z^i) = i$ if and only if $H(x/x_i) = H(y/y_i) = i$. This requires that $\frac{-\log h(j)}{z^i_j/z^i_i} = \frac{-\log h(j)}{\max\left(\frac{x_j}{x_i},\frac{y_j}{y_i}\right)}$. Thus for fixed $i$, $z^i_j = \frac{\max\left(\frac{x_j}{x_i},\frac{y_j}{y_i}\right)}{\sum_k \max\left(\frac{x_k}{x_i},\frac{y_k}{y_i}\right)}$. Consequently,
\begin{align*}
\Pr\left[H(z^i) = i\right] = \frac{1}{\sum_j\max\left(\frac{x_j}{x_i},\frac{y_j}{y_i}\right)}
\end{align*}
Repeating this process for all $i$ in the intersection yields
\begin{align*}
\Pr\left[H(x) = H(y)\right] = \sum_{i} \frac{1}{\sum_{j} \max\left(\frac{x_j}{x_i}, \frac{y_j}{y_i}\right)}\quad\qedhere
\end{align*}\end{proof}

Continuing the notational convention, we will refer to hashing algorithms that achieve $\JP$ as their pair collision probability as $\sP$-MinHashes, and algorithms that achieve $\JW$ as $\sW$-MinHashes.

\section{Intuition about $\JP$}

While $\JP$'s expression is superficially awkward, we can aid intuition by representing it in other ways. The simplest interpretation is to view it as a variant of $\JW$. We rewrite $\JW$ allowing one input to be rescaled before computing each term,
\begin{align*}
\JW(x,y,\alpha) = \sum_i \frac{\min(\alpha_i x_i, y_i)}{\sum_j\max(\alpha_i x_j, y_j)}
\end{align*}
and choose the vector $\alpha$ to maximize this generalized $\JW$. If $\alpha_i x_i > y_i$, increasing $\alpha_i$ raises only the denominator. If $\alpha_i x_i < y_i$, increasing $\alpha_i$ raises the numerator proportionally more than the denominator. So the optimal $\alpha$ sets $\alpha_i x_i = y_i$, and results in $\JP$.

We can derive a more powerful representation by viewing the $\sP$-MinHash algorithm itself geometrically. A vector of $k+1$ exponential random variables, when normalized to sum to 1, is a uniformly distributed random point on the unit $k$-simplex. Every point in the unit $k$-simplex is also a probability distribution over $k+1$ elements. Using these two facts we can construct the $\sP$-MinHash as a function of the simplex as illustrated in Figure~\ref{simplexfigure}.

For a probability distribution $x$, mark the point on the unit simplex corresponding to $x$, $(x_1,\ldots,x_{k+1})$, and connect it to each of the corners of the simplex. These edges divide it into $k+1$ smaller simplices that fill the unit simplex. Each internal simplex has volume proportional to the coordinate of $x$ opposite to its unique exterior face. As a result, a uniformly chosen point on the unit simplex will land in one of the sub-simplices with probability given by $x$. $\sP$-MinHashing is equivalent to sampling in this fashion, but holding the chosen point constant when sampling from each new distribution. The match probability is then proportional to the sum of the intersections of simplices that share an external face. This representation makes several properties obvious on small examples that we prove generally in the next section.

\section{Optimality of $\JP$}
\label{jp-jw-bounds}

When MinHashing is used with a key-value store, high collision probabilities are generally more efficient than low collision probabilities, because as we discuss in section \ref{web-document-section}, it is much cheaper to lower them than to raise them. For this reason, we are interested in the question of the highest collision probability that can be achieved through sampling. The constraint that the samples follow each distribution forces the collision probability to remain discriminative, but given that constraint, we would like to make it as high as possible to maximize flexibility and efficiency.

Suppose for two distributions, $x$ and $y$, we want to choose a joint distribution that maximizes $\Pr[H(x)=H(y)]$.  If we were concerned with only these two particular distributions in isolation, the upper bound of $\Pr[H(x)=H(y)]$ is given by the Total Variation distance, or equivalently $1-\text{L}_1(x,y)/2$. Meeting this bound requires the probability mass where $x$ exceeds $y$ to be perfectly coupled with the mass where $y$ exceeds $x$. Both the mass they share and the mass they do not must be perfectly aligned.

Rather than just two, we want to create a joint distribution (or coupling) of all possible distributions on a given space where the collision probability for any pair is as high as possible. It is always possible to increase the collision probability of one pair at the expense of another so long as the chosen pair has not hit the Total Variation limit, so the kind of optimality we are aiming for is Pareto optimality. This requires that no collision probability be able to exceed ours everywhere; any gain on one pair must have a corresponding loss on another.

This by itself would not be a very consequential bound for its retrieval performance. We really only desire high collision probabilities for items that are similar, and we would happily lower the collision probability of a dissimilar pair to increase it for a similar pair.

However we are able to prove something stronger by examining the pair whose collisions must be sacrificed. To increase the collision probability for one pair above its $\JP$, you must always sacrifice collisions on a pair with even higher $\JP$. To get better recall on one pair, you must always give up recall on an even better pair.

The Jaccard index itself is optimal on uniform distributions, and the short proof is a model for the general case.

\begin{theorem} No method of sampling from uniform discrete distributions can have collision probabilities dominating the Jaccard index of their supports. The Jaccard index is Pareto optimal.
\end{theorem}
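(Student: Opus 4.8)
The plan is to argue by contradiction. A sampling method is a coupling $(\sigma_S)_S$ of the uniform distributions indexed by the finite supports $S$, where $\sigma_S$ is the chosen representative, so $\Pr[\sigma_S = i] = 1/|S|$ for $i\in S$ and $\sigma_S\in S$ always. Its collision probability is $c(S,T) := \Pr[\sigma_S = \sigma_T]$, and a collision automatically lies in $S\cap T$. Suppose $c(S,T) \ge \text{J}(S,T)$ for every pair, with strict inequality for at least one pair. I will show this is impossible by proving $c \equiv \text{J}$; since classical MinHash realizes $\text{J}$, Pareto optimality follows.

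First I would handle the elementary ``add one element'' pairs. Fix a support $S$ and an element $j\notin S$, and put $S' = S\cup\{j\}$, so $\text{J}(S,S') = |S|/(|S|+1)$. Since $\sigma_S\in S$, a collision $\sigma_S = \sigma_{S'}$ forces $\sigma_{S'}\in S$, hence $c(S,S') \le \Pr[\sigma_{S'}\in S] = |S|/(|S|+1) = \text{J}(S,S')$. With the standing hypothesis this is an equality, and equality in $c(S,S') \le \Pr[\sigma_{S'}\in S]$ means $\Pr[\sigma_{S'}\in S,\ \sigma_S\neq\sigma_{S'}] = 0$. This yields the \emph{consistency property}: almost surely $\sigma_{S'}\in\{\sigma_S,\,j\}$, i.e.\ enlarging a support by one element either leaves the representative fixed or moves it to the new element.

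Next I would propagate this. For any $A\subseteq D$, build $D$ from $A$ by adjoining the elements of $D\setminus A$ one at a time; applying the consistency property along this chain gives, almost surely, $\sigma_D\in\{\sigma_A\}\cup(D\setminus A)$, so in particular $\sigma_D\in A \Rightarrow \sigma_D = \sigma_A$. Now take an arbitrary pair $A,B$, set $D = A\cup B$ and $C = A\cap B$, and apply this to the chains $A\subseteq D$ and $B\subseteq D$. If $\sigma_D\in C$ then $\sigma_D = \sigma_A$ and $\sigma_D = \sigma_B$, so $\sigma_A = \sigma_B$; conversely if $\sigma_A = \sigma_B = i$ then $i\in C$, and $\sigma_D$ lies in both $\{i\}\cup(D\setminus A)$ and $\{i\}\cup(D\setminus B)$, whose intersection is $\{i\}$, so $\sigma_D = i\in C$. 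Thus $\{\sigma_A=\sigma_B\} = \{\sigma_D\in C\}$ up to a null event, whence $c(A,B) = \Pr[\sigma_D\in C] = |C|/|D| = \text{J}(A,B)$. This holds for every pair, contradicting strict inequality somewhere.

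The main obstacle is the second step: squeezing the exact consistency property out of the inequality (this is the only place the hypothesis $c\ge\text{J}$ is genuinely used) and verifying that the one-step property iterates cleanly regardless of the order in which elements of $D\setminus A$ are inserted. Everything afterward is finite bookkeeping — only finitely many supports ever appear, so there are no measure-theoretic concerns. The same skeleton (a one-step monotonicity bound forced to equality, an induced consistency property, then propagation along chains) is what one would adapt to establish the analogous optimality of $\JP$ for arbitrary probability distributions.
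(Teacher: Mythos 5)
Your proof is correct, but it takes a genuinely different route from the paper's. The paper's argument is a local, adversarial-witness one: given a single pair $(X,Y)$, it introduces $Z = X \cup Y \setminus (X\cap Y)$ and observes that the three collision events $\{H(x)=H(y)\}$, $\{H(z)=H(x)\}$, $\{H(z)=H(y)\}$ are pairwise disjoint (each pins the shared value into one of the disjoint sets $X\cap Y$, $X\setminus Y$, $Y\setminus X$), while the corresponding Jaccard indices already sum to $1$; hence exceeding $\mathrm{J}$ on $(X,Y)$ forces a deficit on $(Z,X)$ or $(Z,Y)$. You instead prove a global rigidity statement: assuming domination everywhere, the forced equality $c(S,S\cup\{j\}) = |S|/(|S|+1)$ yields an almost-sure consistency property ($\sigma_{S'}\in\{\sigma_S, j\}$), which you propagate along chains to conclude $c \equiv \mathrm{J}$, contradicting strict inequality anywhere. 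Your chain argument is sound (the intermediate sets are all valid supports, and $(D\setminus A)\cap(D\setminus B)=\emptyset$ closes the equivalence $\{\sigma_A=\sigma_B\}=\{\sigma_D\in A\cap B\}$). What the paper's version buys is that it explicitly exhibits \emph{which} pair must be sacrificed, and this three-disjoint-events skeleton is precisely what Theorem~\ref{optimality} generalizes to $\JP$ via the distributions $z^a$ and the events $E_a^<, E_a^=, E_a^>$; your closing suggestion that the consistency-along-chains skeleton adapts to arbitrary distributions is less plausible, since there is no weighted analogue of ``adjoin one element at a time'' that forces exact equality at each step. What your version buys is a slightly stronger-sounding conclusion in the uniform case (any dominating coupling must agree with $\mathrm{J}$ identically) at the cost of needing the domination hypothesis on all the auxiliary pairs $(S, S\cup\{j\})$ rather than on just three distributions.
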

\begin{proof}
Let $Z$ be the symmetric difference of $X$ and $Y$, $Z = X \cup Y - X \cap Y$, and let $z, x, y$ be the corresponding discrete uniform distributions. The three intersections, $X\cap Y,\  X\setminus Y,\  Y\setminus X$, are disjoint, so the three collision events, $H(x) = H(y)$, $H(z) = H(x)$, and $H(z) = H(y)$, are also disjoint. As disjoint events, their probabilities must sum to at most 1. When the three probabilities are given by the Jaccard index of the corresponding sets, they already sum to 1, respectively, $\frac{|X\cap Y|}{|X\cup Y|}, \frac{|X \setminus Y|}{|X\cup Y|}, \frac{|Y \setminus X|}{|X\cup Y|}$, so the Jaccard index is Pareto optimal.
\end{proof}
To prove the same claim on $\JP$ for all distributions, we need a few tools. We can rearrange $\JP$ to separate the two iteration indices within the max.
\begin{align}
\label{splitindices}
\begin{split}
\JP(x,y) &= \sum_i\frac{x_i}{\sum_j y_j\max\left(\frac{x_j}{y_j}, \frac{x_i}{y_i}\right)}
\end{split}
\end{align}
This lets us characterize the $\argmax$ choices in the denominator as functions of a sorted list of $x_i/y_i$ where $\frac{x_i}{y_i} \geq \frac{x_{i+1}}{y_{i+1}}$. If we reindex $i$ according to this sorted list, then we can describe each $\JP(x,y)_i$ knowing that $i$ terms choose the left side of the max, and $n-i$ terms choose the right. (This also lets us compute $\JP$ in $O(n \log n)$ rather than $O(n^2)$ time by sorting first and keeping running sums of each choice of the max.)

To analyze $\JP$ we will need to refer to each term in its outer sum via subscript. $\JP(x,y)_i \coloneqq \frac{1}{\sum_{j} \max\left(\frac{x_j}{x_i}, \frac{y_j}{y_i}\right)} = z^i_i$. We will also use quantifiers in this subscript to indicate a partial sum, i.e. $\JP(x,y)_{i>a} \coloneqq \sum_{\{i: i>a\}} \JP(x,y)_i$.

\begin{lemma} Useful tools from sorting $\JP$'s indices:  \label{sorted_consequences}
\begin{enumerate}
\item For at least two values of $i$, and distributions $x,y$, $\JP(x,y)_i = \min(x_i, y_i)$.
\item Let $w_1...w_n$ be distributions with disjoint support. Consider two linear combinations of these distributions with coefficients $\alpha$ and $\beta$. $\JP(\alpha \cdot w, \beta \cdot w) = \JP(\alpha, \beta)$
\end{enumerate}
\end{lemma}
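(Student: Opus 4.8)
The plan is to reduce both parts to the split-index rewriting \eqref{splitindices} of $\JP$ together with the sorted list of ratios $r_i \coloneqq x_i/y_i$, ordered $r_1 \ge \cdots \ge r_n$, extended by $r_i = \infty$ when $y_i = 0 < x_i$ and $r_i = 0$ when $x_i = 0 < y_i$ (indices with $x_i = y_i = 0$ play no role). For part (1) I would show the identity is realized by the two extreme indices, $i = 1$ and $i = n$. Take $i = n$. If $x_n = 0$ or $y_n = 0$, the inner sum $\sum_j \max(x_j/x_n, y_j/y_n)$ diverges, so $\JP(x,y)_n = 0 = \min(x_n, y_n)$ and there is nothing to check. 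Otherwise both are positive, and for every $j$ the inequality $r_j \ge r_n$ rearranges to $x_j/x_n \ge y_j/y_n$ (this also holds when $y_j = 0$), so every term of the inner sum selects its left branch and $\sum_j \max(x_j/x_n, y_j/y_n) = (\sum_j x_j)/x_n = 1/x_n$ because $x$ is a probability distribution; hence $\JP(x,y)_n = x_n$. Finally $r_n \le 1$, since $r_n > 1$ would force $r_j > 1$ and hence $x_j > y_j$ for all $j$, contradicting $\sum_j x_j = \sum_j y_j = 1$; so $x_n \le y_n$ and $\JP(x,y)_n = \min(x_n, y_n)$. The index $i = 1$ is the mirror image, with $x$ and $y$ interchanged and $r_1 \ge 1$, giving $\JP(x,y)_1 = y_1 = \min(x_1, y_1)$; as $n \ge 2$ these are two distinct indices.

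For part (2), write $\mathbf{x} = \sum_k \alpha_k w_k$ and $\mathbf{y} = \sum_k \beta_k w_k$ and set $S_k = \supp w_k$, the $S_k$ being pairwise disjoint. The one substantive computation is the factorization of the max: for $\ell \in S_k$ and $m \in S_{k'}$, using $x_\ell = \alpha_k (w_k)_\ell$, $x_m = \alpha_{k'}(w_{k'})_m$ and the analogues for $y$,
\[
\max\!\left(\frac{x_m}{x_\ell}, \frac{y_m}{y_\ell}\right) = \frac{(w_{k'})_m}{(w_k)_\ell}\,\max\!\left(\frac{\alpha_{k'}}{\alpha_k}, \frac{\beta_{k'}}{\beta_k}\right).
\]
Summing over $m$ and using $\sum_{m \in S_{k'}}(w_{k'})_m = 1$ collapses the inner sum to $\frac{1}{(w_k)_\ell}\sum_{k'}\max(\alpha_{k'}/\alpha_k, \beta_{k'}/\beta_k)$, i.e.\ $\JP(\mathbf{x},\mathbf{y})_\ell = (w_k)_\ell\,\JP(\alpha,\beta)_k$. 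Summing $\ell$ over $S_k$ with $\sum_{\ell \in S_k}(w_k)_\ell = 1$, and then summing $k$ over $\{k : \alpha_k,\beta_k > 0\}$ — which is exactly the common support of $\mathbf{x}$ and $\mathbf{y}$ — then yields $\JP(\mathbf{x},\mathbf{y}) = \JP(\alpha,\beta)$.

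I expect the only real friction to be the support bookkeeping in part (1): when $\supp x \ne \supp y$ the extreme-ratio index can lie outside the common support, and one has to notice that there the identity holds for the trivial reason that both sides vanish, while the informative case is precisely the one in which the probability normalization pins the extreme ratio into $[0,1]$ or $[1,\infty]$ (in fact in that case $\supp y \subseteq \supp x$ automatically, which is what makes the inner sum telescope to $1/x_n$). Part (2) is essentially a change of variables once the factorization above is written down, modulo checking that the common supports match up as claimed.
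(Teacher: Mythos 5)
Your proof is correct, and for part (1) it is the same argument as the paper's, only written out in full: the paper simply observes that at the $\argmax$ and $\argmin$ of $x_i/y_i$ every inner $\max$ picks the same branch, so the inner sum telescopes to $1/x_i$ or $1/y_i$ and normalization forces the extreme ratio to the correct side of $1$. You add the support bookkeeping (the case $x_n=0$ or $y_n=0$, and the observation that $\supp y \subseteq \supp x$ at the informative extreme), which the paper glosses over; that care is warranted, since these degenerate cases do arise in how the lemma is later used. For part (2) you take a mildly different route: the paper proves a two-element merging step --- indices with equal ratio $x_k/y_k = x_l/y_l$ can be fused without changing $\JP$ --- and iterates it until each $w_k$ collapses to a single coordinate, whereas you perform the whole collapse in one closed-form computation by factoring $\max\left(\frac{x_m}{x_\ell},\frac{y_m}{y_\ell}\right) = \frac{(w_{k'})_m}{(w_k)_\ell}\max\left(\frac{\alpha_{k'}}{\alpha_k},\frac{\beta_{k'}}{\beta_k}\right)$ and summing blockwise. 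The underlying mechanism (a common positive factor pulls out of the $\max$, and each block's mass sums to $1$) is identical; your version avoids the induction on merges and makes the identity $\JP(\mathbf{x},\mathbf{y})_\ell = (w_k)_\ell\,\JP(\alpha,\beta)_k$ explicit, while the paper's merging step is slightly more reusable as a standalone observation. No gaps.
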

\begin{proof}
For a given $i$, if every $\max$ chooses the same side, $\JP(x,y)_i = \min(x_i, y_i)$. $x_i/y_i$ has both an $\argmax$ and $\argmin$ for which this is true. This gives us part 1.

Let $x_k/y_k = x_l/y_l$.
\begin{align*}
\sum_{i\in \{k,l\}}\frac{x_i}{\sum_j y_j\max\left(\frac{x_j}{y_j}, \frac{x_i}{y_i}\right)} = \frac{x_k + x_l}{\sum_j y_j\max\left(\frac{x_j}{y_j}, \frac{x_k}{y_k}\right)} \\
\sum_{j\in \{k,l\}} y_j\max\left(\frac{x_j}{y_j}, \frac{x_i}{y_i}\right) = (y_k + y_l)\max\left(\frac{x_k}{y_k}, \frac{x_i}{y_i}\right)
\end{align*}
Thus, we can form $x'$, $y'$ by merging the mass of $x_k, x_l$ into one element and $y_k, y_l$ into one element and have $\JP(x,y) = \JP(x',y')$ Let $w_1...w_n$ be distributions with disjoint support, and consider $\JP(\alpha\cdot w, \beta\cdot w)$. Repeat the merging process until all elements of each $w_i$ are merged into one. This gives us (2).
\end{proof}

\begin{figure*}
    \centering
    \begin{subfigure}[b]{0.23\textwidth}
        \centering
        \includegraphics[width=0.9\textwidth]{s-xy.png}
        \caption{$\Pr\left[ H(x), H(y) = i, j\right]$}
    \end{subfigure}\hfill
    \begin{subfigure}[b]{0.23\textwidth}
        \centering
        \includegraphics[width=0.9\textwidth]{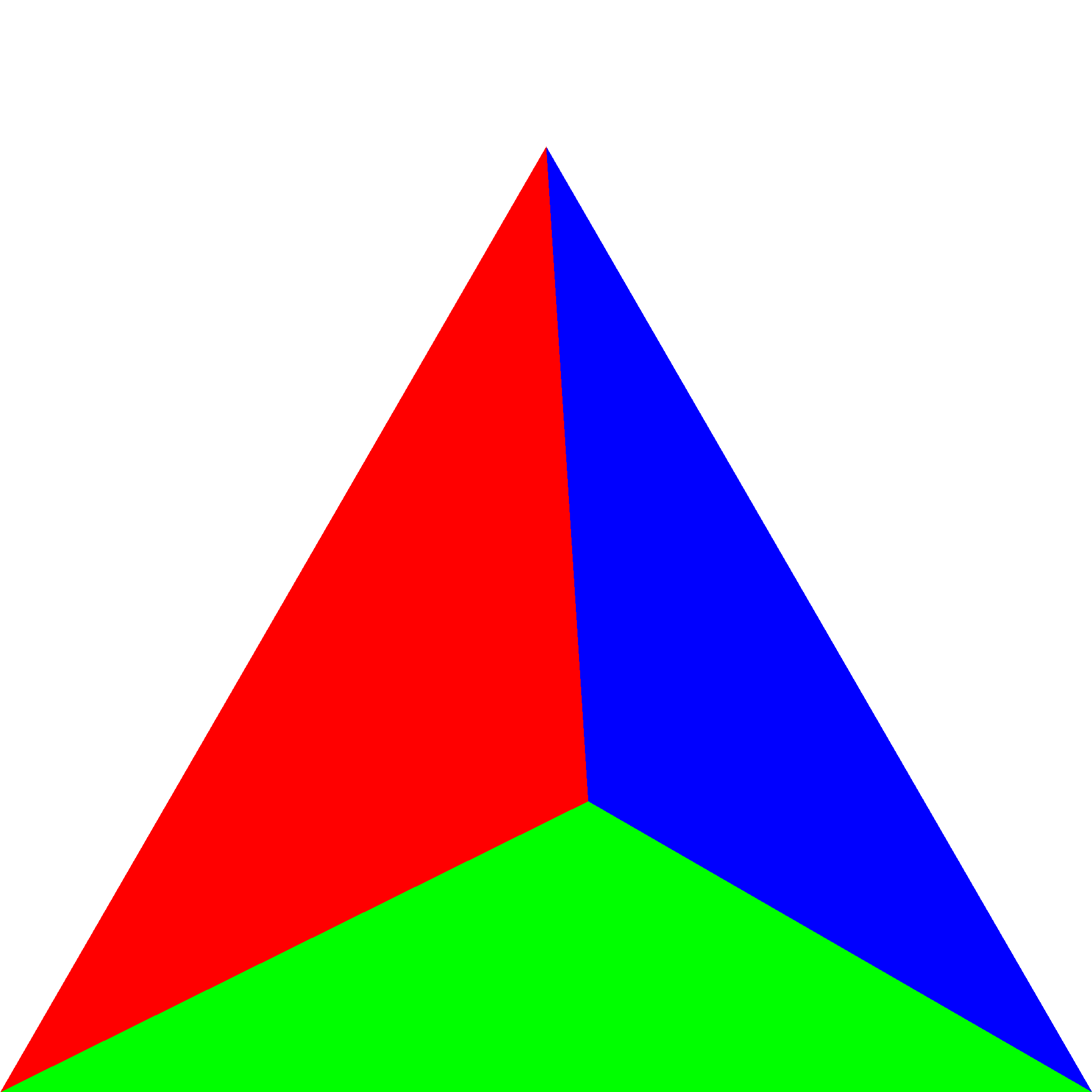}
        \caption{$\Pr\left[ H(z^\text{g}) = i\right]$}
    \end{subfigure}\hfill
    \begin{subfigure}[b]{0.23\textwidth}
        \centering
        \includegraphics[width=0.9\textwidth]{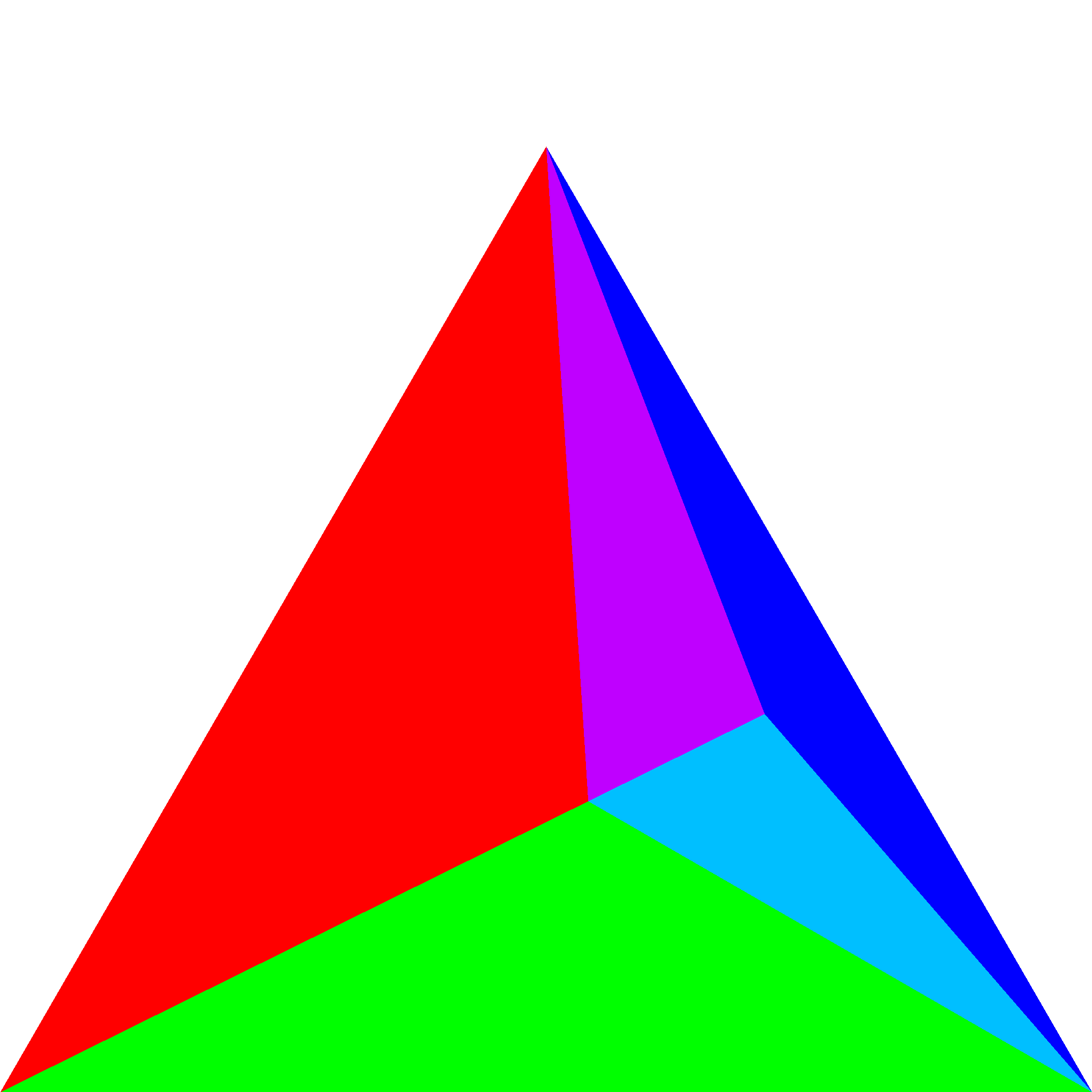}
        \caption{$\Pr\left[ H(x),H(z^\text{g}) = i,j\right]$}
    \end{subfigure}\hfill
    \begin{subfigure}[b]{0.23\textwidth}
        \centering
        \includegraphics[width=0.9\textwidth]{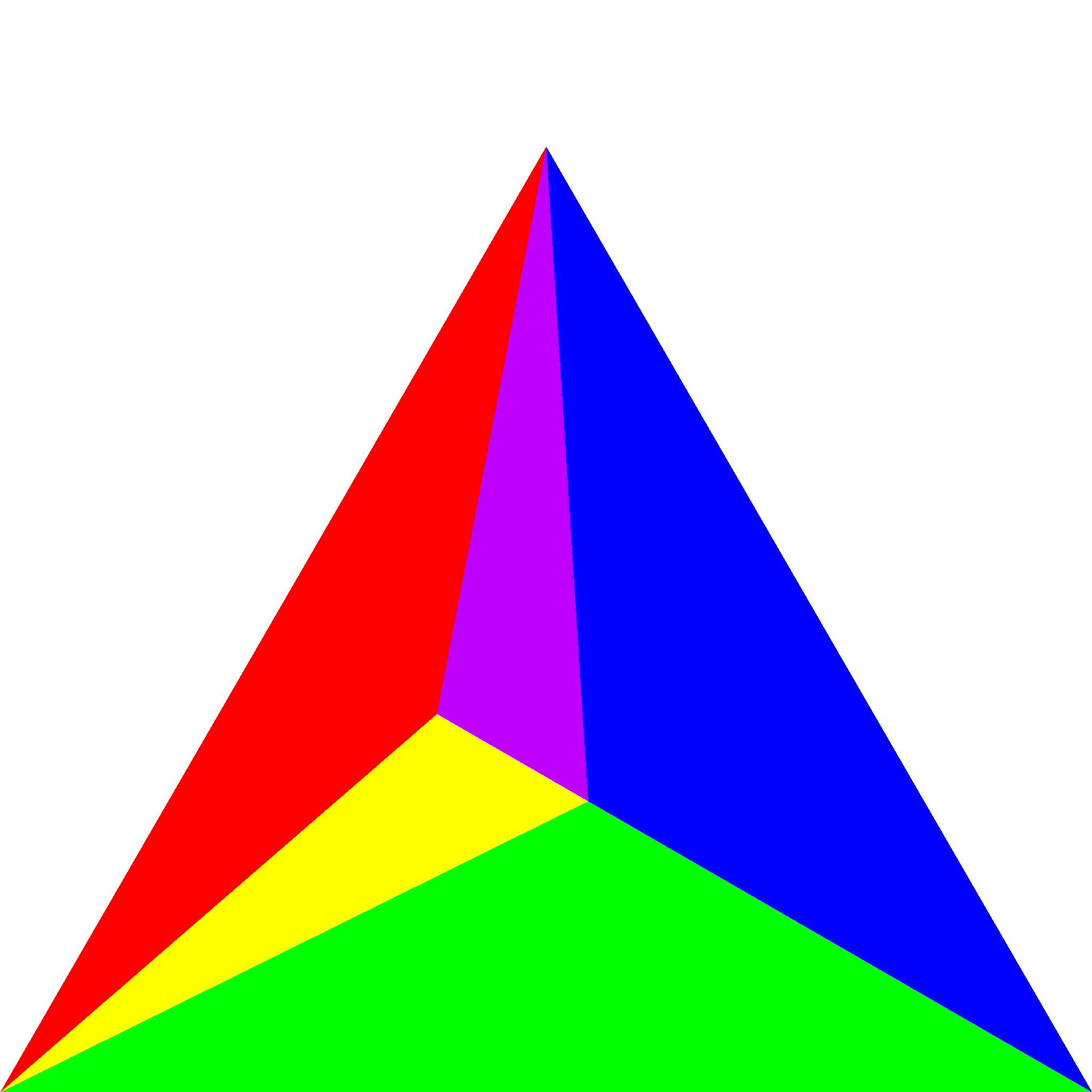}
        \caption{$\Pr\left[ H(y), H(z^\text{g}) = i,j\right]$}
    \end{subfigure}
    \caption{A visual proof of theorem \ref{optimality} on 3 element disributions. Using the same distributions as in Figure \ref{simplexfigure}, the green regions of $x$ and $y$ could be shifted to overlap more and improve the collision probability of this pair, but any modification that achieves that would worsen at least one of collision probabilities between $x$ or $y$ and  $z^\text{green}$ (both of which have higher collision probability than the $(x, y)$ pair.) }
\label{zfigure}
\end{figure*}

This ordering also lets us work more effectively with the $z$ distributions we constructed in theorem \ref{Collision_Probability}. This lemma contains all the algebra needed for the main proof.
\begin{lemma} \label{zterms} For fixed $a$, let $z^a$, $x$, $y$ be probability distributions where $z^a_i \propto \max\left(\frac{x_i}{x_a}, \frac{y_i}{y_a}\right)$.  Reorder $i$ according to the sorting of (\ref{splitindices}) such that $\frac{x_i}{y_i}\geq\frac{x_{i+1}}{y_{i+1}}$. The following are true:
\begin{enumerate}
\item $\forall i \leq a,~ \JP(x,z^a)_i = z^a_i = \min(x_i, z^a_i) \geq z^i_i$.
\item $\forall i \geq a,~ \JP(x,z^a)_i = z^i_i$.
\item $\JP(x,y)_i\nobreak =\nobreak \min(x_i,y_i)\nobreak \implies$ \\
$\JP(x,z^a)_i\nobreak =\nobreak \min(x_i, z^a_i)$.
\item $\JP(x, z^a) \geq \JP(x,y)$ and $\JP(y, z^a) \geq \JP(x,y)$
\end{enumerate}
\end{lemma}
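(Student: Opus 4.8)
The plan is to push everything through the split-index form (\ref{splitindices}) with the sorting convention $x_1/y_1 \ge x_2/y_2 \ge \cdots \ge x_n/y_n$ in force. The preliminary step is to read off the structure of $z^a$ under this sorting: setting $C_a \coloneqq \sum_j \max(x_j/x_a, y_j/y_a)$, so that $z^a_a = 1/C_a = \JP(x,y)_a$, the index $j\le a$ makes the $\max$ pick the $x$-side (hence $z^a_j = (x_j/x_a)/C_a$) and $j\ge a$ makes it pick the $y$-side (hence $z^a_j = (y_j/y_a)/C_a$). Consequently $x_j/z^a_j = C_a\, y_a \min(x_j/y_j,\, x_a/y_a)$ is nonincreasing in $j$ and flat ($= x_a C_a$) for $j \le a$. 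This already shows the sorting for the pair $(x, z^a)$ agrees with the given one (with $1,\dots,a$ tied at the top), which is exactly what makes formula (\ref{splitindices}) directly usable for $\JP(x,z^a)$.

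For part (1) take $i \le a$: then $x_i/z^a_i = x_a C_a$ is the largest of the $x_j/z^a_j$, so every $\max$ in (\ref{splitindices}) equals $x_i/z^a_i$ and the term collapses to $\JP(x,z^a)_i = z^a_i$ — the ``all maxes on one side'' situation of Lemma~\ref{sorted_consequences}(1). Since $x_a C_a \ge \sum_j x_j = 1$ we get $z^a_i = x_i/(x_a C_a) \le x_i$, i.e.\ $z^a_i = \min(x_i, z^a_i)$; and $z^a_i \ge z^i_i$ is equivalent to $x_i C_i \ge x_a C_a$, which holds because $x_i C_i = \sum_j \max(x_j,\, y_j x_i/y_i)$ is nondecreasing in $x_i/y_i$ and $x_i/y_i \ge x_a/y_a$ for $i \le a$. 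For part (2) take $i \ge a$ and use $z^a_j \max(x_j/z^a_j, x_i/z^a_i) = \max(x_j,\, x_i z^a_j/z^a_i)$: when $j \ge a$ one has $z^a_j/z^a_i = y_j/y_i$, and when $j \le a$ one has $x_i z^a_j/z^a_i = x_j(x_i/y_i)/(x_a/y_a) \le x_j$ so the $\max$ is just $x_j$ — and exactly the same two evaluations hold with $y_j/y_i$ in place of $z^a_j/z^a_i$ (using $x_i/y_i \le x_j/y_j$ for $j \le a \le i$). Hence the denominator of $\JP(x,z^a)_i$ in (\ref{splitindices}) coincides term-by-term with that of $\JP(x,y)_i = z^i_i$.

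Part (3) is then a corollary: if $i \le a$ it is (1) verbatim; if $i > a$, (2) gives $\JP(x,z^a)_i = z^i_i = \JP(x,y)_i = \min(x_i,y_i)$, and it remains to identify $\min(x_i,y_i)$ with $\min(x_i,z^a_i)$. For this I would invoke the characterization that $\JP(x,y)_i = \min(x_i,y_i)$ forces $x_i/y_i$ to be extremal among the ratios: if it is the maximal ratio then ($i>a$ forces) $x_a/y_a$ is maximal too, so $C_a = 1/y_a$ and $z^a_i = y_i$; if it is the minimal ratio, a one-line bound using $x_i/y_i \le x_j/y_j$ and $x_i/y_i \le x_a/y_a$ gives $x_i\, y_a C_a \le \sum_j x_j = 1$, i.e.\ $z^a_i \ge x_i$. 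Finally part (4): summing (1) over $i < a$, the boundary equality $\JP(x,z^a)_a = z^a_a = \JP(x,y)_a$, and (2) over $i > a$ yields $\JP(x,z^a) = \sum_{i<a} z^a_i + \sum_{i\ge a}\JP(x,y)_i \ge \sum_i \JP(x,y)_i = \JP(x,y)$ via $z^a_i \ge z^i_i$; the twin bound $\JP(y,z^a) \ge \JP(x,y)$ is the same argument with $x$ and $y$ interchanged (which only reverses the sorting while leaving $z^a$ and $a$ unchanged).

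The main obstacle I anticipate is getting parts (2) and (3) watertight: in (2) the index $a$ sits on the boundary of both the ``$j \le a$'' and ``$j \ge a$'' regimes and must be checked to give the same value either way, and in (3) the case $i>a$ genuinely needs the extremal-ratio characterization rather than (1). The rest is bookkeeping once the two descriptions of $z^a_j$ and the resulting shape of $x_j/z^a_j$ are nailed down.
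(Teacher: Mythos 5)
Your proof is correct and follows essentially the same route as the paper's: split the indices at $a$, observe that $z^a_j \propto x_j/x_a$ for $j\le a$ and $z^a_j \propto y_j/y_a$ for $j\ge a$, and verify the term-by-term identities before summing (your monotonicity argument that $x_iC_i$ is nondecreasing in $x_i/y_i$ for $z^a_i\ge z^i_i$, and your explicit extremal-ratio case analysis in part (3), are if anything cleaner than the paper's corresponding steps). One small slip to fix: in part (3) the displayed bound should read $x_i\,y_a C_a \le \sum_j x_j y_i = y_i$ (equivalently $(x_i/y_i)\,y_a C_a \le \sum_j x_j = 1$), not $x_i\,y_a C_a \le 1$; as literally written the inequality only yields $z^a_i \ge x_i y_i$, although the two termwise estimates you invoke do deliver the correct conclusion $z^a_i \ge x_i$.
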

\begin{proof} For $i \leq a$, $\frac{x_i}{y_i} \geq \frac{x_a}{y_a}  \Rightarrow \frac{x_i}{x_a}\geq \frac{y_i}{y_a}$, hence $z_i^a \propto \frac{x_i}{x_a}$. Similarly, for $i \geq a$, $z^a_i \propto \frac{y_i}{y_a}$. Working first with the lower group of indices, 
\begin{align*}
\forall i \leq a,~ \JP(x,z^a)_i &= \frac{1}{\sum_j \max\left(\frac{x_j}{x_i}, \frac{\max\left(\frac{x_j}{x_a}, \frac{y_j}{y_a}\right)}{x_i/x_a}\right)}\\
&= \frac{x_i/x_a}{\sum_j \max\left(\frac{x_j}{x_a}, \frac{y_j}{y_a}\right)} = z^a_i
\end{align*}
And since $\frac{x_i/x_a}{\sum_j \max\left(\frac{x_j}{x_a}, \frac{y_j}{y_a}\right)} \leq \frac{x_i/x_a}{\sum_j \frac{x_j}{x_a}}$ we also know that $\forall i \leq a,~ z^a_i = \min(x_i, z^a_i)$. Furthermore, $\forall i \le a$, 
\begin{align*}
z^a_i &= \frac{1}{\sum_j \max\left(\frac{x_j/x_a}{\max\left(\frac{x_i}{x_a}, \frac{y_i}{y_a}\right)},\frac{ y_j/y_a}{\max\left(\frac{x_i}{x_a}, \frac{y_i}{y_a}\right)}\right)}\\
&= \frac{1}{\sum_j \max\left(\min\left(\frac{x_j}{x_i}, \frac{x_jy_a}{x_ay_i}\right), \min\left(\frac{y_j}{y_i}, \frac{y_j x_a}{y_ax_i}\right)\right)} \geq z^i_i
\end{align*}
which gives us part 1. Now, continuing on to the upper group of indices,
\begin{align*}
\forall i \geq a,~ \JP(x,z^a)_i &= \frac{1}{\sum_j \max\left(\frac{x_j}{x_i}, \frac{\max\left(\frac{x_j}{x_a}, \frac{y_j}{y_a}\right)}{y_i/y_a}\right)}\\
&= \frac{1}{\sum_j \max\left(\frac{x_j}{x_i}, x_j\frac{y_a}{x_a,y_i}, \frac{y_j}{y_i}\right)}
\end{align*}
Since $\forall i \geq a,~\frac{x_ay_i}{y_a} \geq x_i$ we can simplify further to conclude part 2. $\forall i \geq a,~ \JP(x,z^a)_i = z^i_i$. By noting that the choices within the max are preserved, we conclude part 3. Finally, having bounded all indices, part 4:
\begin{align*}
\forall i,~ \JP(x, z^a)_i &\geq \JP(x,y)_i\\
\JP(x,z^a) &\geq \JP(x,y). \mbox{\qedhere}
\end{align*}
\end{proof}
We now have the tools to prove the optimality of $\JP$.

\begin{theorem}\label{optimality} Let $G$ be a sampling method. If $\Pr[G(x) = G(y) = i] > \JP(x,y)_i$, there exists a $z$ such that $\Pr[G(x) = G(z)] < \JP(x,z)$ or $\Pr[G(y) = G(z)] < \JP(y,z)$. This implies that no method of sampling from discrete distributions can have collision probabilities dominating $\JP$. $\JP$ is Pareto optimal. 

Furthermore, $\JP(x,z) \geq \JP(x,y)$ and $\JP(y,z) \geq \JP(x,y)$. To exceed $\JP(x,y)$, $G$ must sacrifice at least one pair that is closer under $\JP$ than $(x,y)$.\end{theorem}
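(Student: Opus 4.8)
The plan is to take $z$ to be exactly the auxiliary distribution $z^a$ constructed in Theorem~\ref{Collision_Probability}, for the index $a$ witnessing $\Pr[G(x)=G(y)=a]>\JP(x,y)_a$ (we may assume the hypothesis is non-vacuous, which forces $\JP(x,y)_a<\min(x_a,y_a)$, so $a$ is a genuinely ``mixed'' index in the sorting of (\ref{splitindices})). With this choice the final sentence of the theorem is immediate, since it is precisely part~4 of Lemma~\ref{zterms}: $\JP(x,z^a)\ge\JP(x,y)$ and $\JP(y,z^a)\ge\JP(x,y)$. So the real content is the Pareto statement.

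First I would record the index-wise bookkeeping for $z^a$. Applying Lemma~\ref{zterms} to $(x,z^a)$ and, after swapping $x\leftrightarrow y$ (which reverses the sorting and moves $a$ to the mirror position), to $(y,z^a)$, one gets $\JP(x,z^a)_i=z^a_i$ for $i\le a$ and $\JP(x,z^a)_i=\JP(x,y)_i$ for $i\ge a$, together with the mirror statements $\JP(y,z^a)_i=\JP(x,y)_i$ for $i\le a$ and $\JP(y,z^a)_i=z^a_i$ for $i\ge a$; parts~1--2 of the same lemma also give $z^a_i\ge\JP(x,y)_i$ for every $i$. Adding the resulting identity $\JP(x,z^a)_i+\JP(y,z^a)_i=z^a_i+\JP(x,y)_i$ over all $i$ yields the conservation law $\JP(x,z^a)+\JP(y,z^a)=1+\JP(x,y)$, which plays the role that $\JP(x,y)+\JP(x,z)+\JP(y,z)=1$ played in the uniform-distribution proof (note the distribution at hand is the ``union''-type $z^a$, not the symmetric difference of the uniform case, which is forced on us by the ``Furthermore'' clause).

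Next I would run the coupling argument through the partition of the sample space by the cells $\{G(z^a)=i\}$, which have measures $z^a_i$. Since $\{G(x)=G(z^a)=i\}\subseteq\{G(z^a)=i\}$, we get $\Pr[G(x)=G(z^a)=i]\le z^a_i=\JP(x,z^a)_i$ for all $i\le a$, and symmetrically $\Pr[G(y)=G(z^a)=i]\le\JP(y,z^a)_i$ for all $i\ge a$: $G$ can never beat $\JP$ on $(x,z^a)$ at an index $\le a$, nor on $(y,z^a)$ at an index $\ge a$. Now set $M=\Pr[G(x)=G(y)=a]$ and $r=\Pr[G(x)=G(y)=G(z^a)=a]$. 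Since the event ``at least two of $G(x),G(y),G(z^a)$ equal $a$'' is contained in $\{G(z^a)=a\}\cup(\{G(x)=a\}\cap\{G(y)=a\})$, inclusion--exclusion gives $\Pr[G(x)=G(z^a)=a]+\Pr[G(y)=G(z^a)=a]\le z^a_a+r$. By hypothesis $M>\JP(x,y)_a=z^a_a\ge r$, so a positive mass $M-r$ of $\{G(x)=G(y)=a\}$ lives in cells $\{G(z^a)=i\}$ with $i\ne a$; on each such cell $G(x)=G(y)=a\ne i$, so this spilled mass is useless for both the $(x,z^a)$ and the $(y,z^a)$ collisions, i.e.\ it subtracts directly from $z^a_i-\Pr[G(x)=G(z^a)=i]$ and from $z^a_i-\Pr[G(y)=G(z^a)=i]$. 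Tallied against the conservation law, this produces a strict shortfall against $\JP$ in the cells $i<a$ for $(x,z^a)$ and/or in the cells $i>a$ for $(y,z^a)$, hence — using that $G$ cannot overshoot $\JP$ on $(x,z^a)$ at indices $\le a$, nor on $(y,z^a)$ at indices $\ge a$ — in at least one of $\Pr[G(x)=G(z^a)]$, $\Pr[G(y)=G(z^a)]$. Pareto optimality then follows by contradiction: a $G$ dominating $\JP$ everywhere with strictness on $(x,y)$ has $\sum_i\Pr[G(x)=G(y)=i]>\sum_i\JP(x,y)_i$, hence exceeds $\JP(x,y)_a$ at some $a$, and the above exhibits a pair on which $G$ is strictly below $\JP$.

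The step I expect to be the main obstacle is the last tally: ruling out that $G$ recoups the spilled mass by overshooting $\JP$ on $(x,z^a)$ at indices $i>a$ (equivalently on $(y,z^a)$ at indices $i<a$), where $\JP(x,z^a)_i=\JP(x,y)_i$ sits strictly below the cell mass $z^a_i$ and thus leaves room to overshoot. Closing this is exactly where the full force of Lemma~\ref{zterms} is needed: one must combine $z^a_i\ge\JP(x,y)_i$, the marginal bounds $\Pr[G(x)=G(z^a)=i]\le\min(x_i,z^a_i)$, and the in-cell inclusion--exclusion $\Pr[G(x)=G(z^a)=i]+\Pr[G(y)=G(z^a)=i]\le z^a_i+\Pr[G(x)=G(y)=i]$ to show that any overshoot of one of the two pairs at such an $i$ is matched by an at-least-as-large shortfall of the other pair there, so the deficit created by the spill cannot be cancelled on net. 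Organizing this cleanly — or, alternatively, reducing to the already-proven uniform case by rationalizing $x,y$ and applying the merging operation of Lemma~\ref{sorted_consequences} — is the crux of the argument.
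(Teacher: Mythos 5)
Your setup coincides with the paper's: the witness is built from the same auxiliary distribution $z^a$, your conservation law $\JP(x,z^a)+\JP(y,z^a)=1+\JP(x,y)$ is a correct repackaging of the paper's observation that the three pairwise-disjoint events $\{G(x)=G(z^a)<a\}$, $\{G(x)=G(y)=a\}$, $\{G(y)=G(z^a)>a\}$ have $\JP$-probabilities summing to exactly $1$, and your first conclusion --- that the excess at index $a$ forces $\Pr[G(x)=G(z^a)<a]<\JP(x,z^a)_{i<a}$ or $\Pr[G(y)=G(z^a)>a]<\JP(y,z^a)_{i>a}$ --- is exactly the paper's first step. The gap is the step you yourself flag as the crux, and the closure you sketch does not work. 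Write $u_i=\Pr[G(x)=G(z^a)=i]$, $v_i=\Pr[G(y)=G(z^a)=i]$, $t_i=\Pr[G(x)=G(y)=G(z^a)=i]$. Your in-cell inclusion--exclusion gives $u_i+v_i\le z^a_i+t_i$, so at an index $i>a$ an overshoot of $u_i$ above $\JP(x,z^a)_i=\JP(x,y)_i$ is matched by an at-least-as-large shortfall of $v_i$ below $\JP(y,z^a)_i=z^a_i$ only if $t_i\le\JP(x,y)_i$. But at an unsaturated index $i\ne a$ the only a priori bound is $t_i\le\min(x_i,y_i,z^a_i)$, which can strictly exceed $\JP(x,y)_i$ there, so both pairs can overshoot simultaneously at such indices and nothing in your inequalities prevents the deficit created near $a$ from being recouped. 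Indeed, assuming both totals meet their $\JP$ and summing your bounds yields only $\sum_{i\ne a}t_i>\sum_{i\ne a}\JP(x,y)_i$, which is not a contradiction.

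The paper closes exactly this hole by induction on $m=|\{i:\JP(x,y)_i<\min(x_i,y_i)\}|$: Lemma \ref{zterms} shows that for the pair $(x,z^a)$ every index $i\le a$ is saturated, $\JP(x,z^a)_i=\min(x_i,z^a_i)$, while saturation at the remaining indices is inherited from $(x,y)$, so $(x,z^a)$ has strictly smaller $m$. If $G$ compensates by exceeding $\JP(x,z^a)_i$ at some $i\ge a$, the induction hypothesis applied to $(x,z^a)$ produces a further witness $z$, with $\JP(x,z)\ge\JP(x,z^a)\ge\JP(x,y)$ carried along the chain. This also means you cannot commit to $z=z^a$ as the final witness, so the ``Furthermore'' clause likewise needs the inductive chain rather than a single application of Lemma \ref{zterms}.4. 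Your fallback of rationalizing $x,y$ and reducing to the uniform case does not obviously go through either: Lemma \ref{sorted_consequences}.2 merges equal-ratio coordinates, which is the wrong direction for turning a general pair into a uniform one. Some form of the iteration appears unavoidable, and your proposal is missing it.
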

\begin{proof}

Let $m$ be the number of elements $i$ for which $\JP(x,y)_i < \min(x_i,y_i)$.

In the base case where $m = 0$, $\JP(x,y) = \sum_i \min(x_i,y_i)$ which cannot be improved.

Assume the proposition to be proved is true $\forall x, \forall y$, and $\forall p < m$. By \ref{sorted_consequences}.1 we know that $ m \le n - 2$, since at least the two endpoints of the sorted list have reached their upper bound. We proceed by induction on $m$.

As in theorem \ref{Collision_Probability}, for each $a$ where $\JP(x,y)_a < \min(x_a,y_a)$, construct a distribution $z^a$, where $z^a_i \propto \max\left(\frac{x_i}{x_a}, \frac{y_i}{y_a}\right)$. Reorder $i$ according to the sorting of (\ref{splitindices}) such that $\frac{x_i}{y_i}\geq\frac{x_{i+1}}{y_{i+1}}$. From lemma \ref{zterms} we know $\forall i \leq a,~ \JP(x,z^a)_i = z^a_i = \min(x_i, z^a_i)$ and $\forall i \geq a,~ \JP(y,z^a)_i = z^a_i = \min(y_i, z^a_i)$.

Now consider a new sampling method, $G$. The following events are pairwise disjoint by inspection. 
\begin{align*}
E_a^< &\coloneqq\{G(x) = G(z^a) < a\} \\
E_a^= &\coloneqq\{G(x) = G(y) = a\} \\
E_a^> &\coloneqq\{G(y) = G(z^a) > a\}
\end{align*}
So their probabilities are constrained. $\Pr[E_a^=] + \Pr[E_a^<] + \Pr[E_a^>] \le 1$. When these probabilities are given by $\JP$ they already sum to 1. $\Pr[E_a^=] + \Pr[E_a^<] + \Pr[E_a^>] =  z^a_a + \sum_{i<a} z^a_i + \sum_{i>a} z^a_i = 1$.

From this bound, if $\Pr[G(x) = G(y) = a] > \JP(x,y)_a$, at least one of
\begin{align*}
\Pr[G(x) = G(z^a) < a] < \JP(x,z^a)_{i<a} \\
\Pr[G(y) = G(z^a) > a] < \JP(y, z^a)_{i>a}
\end{align*}
must be true. Since the two cases are symmetric, we will assume the first one: $\Pr[G(x) = G(z^a) < a] < \JP(x,z^a)_{i<a}$.

If $\Pr[G(x) = G(z^a)] < \JP(x,z^a)$ then our $z$ is $z^a$ and we are done. Otherwise, $\Pr[G(x) = G(z^a)] \geq \JP(x,z^a)$, and the terms $i\geq a$ must compensate for the loss on the terms $i<a$, so $\Pr[G(x) = G(z^a) \geq a] > \JP(x,z^a)_{i\geq a}$.

However, $\JP(x, z^a)_a = \JP(y,z^a)_a = z^a_a$, so these terms have exhausted $z^a$ and cannot be increased. Using \ref{zterms}.1 and \ref{zterms} we know that this adds at least one additional term that is fully consumed, i.e. the size of  $\{i : \JP(x, z^a)_i < \min(x_i,z^a_i)\}$ is less than $m$. So by induction hypothesis, if $\Pr[G(x) = (z^a) < a] > \JP(x,z^a)_{i<a}$ there is a new $z$ for which $\Pr[G(x) = G(z)] < \JP(x,z)$ or $\Pr[G(z^a) = G(z)] < \JP(z^a,z)$.

By \ref{zterms} we know that $\JP(x, z^a) \geq \JP(x,y)$ and by induction we know $\JP(x, z) \geq \JP(x, z^a)$, so we conclude $\JP(x, z) \geq \JP(x, y)$ and symmetrically $\JP(y, z) \geq \JP(x, y)$.
\end{proof}

Figure \ref{zfigure} shows the mechanism of the proof intuitively. On three element distributions, two of the elements are fully constrained, in this case the blue and red terms, so we construct our adversarial $z$ around green. On three elements, no induction is necessary and the diagram itself proves the relationship.

Since $\JP$ is only Pareto optimal, we should be able to find a sampling method that exceeds it for some pairs but is below it on others. We can generalize our algorithm to construct such a method. Consider arranging the elements of the state space as the leaves of a tree. Internal nodes in the tree are given the weight of the sum of their children, and each is assigned its own exponential hash. Perform $\sP$-MinHash among the children of the root node. If the selected node is a leaf, emit it as the sample. If it is an internal node, recurse and repeat. In this generalization, our original algorithm is represented by making all elements direct children of the root. We can prioritize collisions on an index $i$ by placing it closer to the root node than all others; in particular, the tree $(i, ( 1, \ldots, n ))$. Since $i$ and its internal-node sibling form a two element distribution, by lemma \ref{splitindices} the probability of a collision on $i$ will be $\min(x_i, y_i)$ for all $x,y$.

What of $\JW$ then? Is it another Pareto optimum? It is not, it is dominated by $\JP$.

\begin{theorem} \label{Jaccard_bounds}
If $\JW(x,y)=\frac{1-p}{1+p}$, then $\frac{1-p}{1+p} \leq \JP(x,y)\leq 1-p$, and there are distributions that achieve both bounds on $\JP$ for any value of $\JW$.
\end{theorem}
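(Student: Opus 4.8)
\emph{Plan.} The idea is to establish the two term-wise bounds
\[
\frac{\min(x_i,y_i)}{\sum_j\max(x_j,y_j)}\ \le\ \JP(x,y)_i\ \le\ \min(x_i,y_i)
\]
for every index $i$ in the common support of $x$ and $y$, and then sum over $i$. Before that I would record the standing hypothesis that $x$ and $y$ are probability distributions: this is genuinely needed, since $\JP$ is scale invariant while $\JW$ is not, and a short rescaling example ($x=(2,2)$, $y=(1,0)$) shows the upper bound $\JP\le 1-p$ fails for general nonnegative vectors. Under the normalization, $\sum_i\max(x_i,y_i)=1+\TV(x,y)$ and $\sum_i\min(x_i,y_i)=1-\TV(x,y)$, so $\JW(x,y)=\tfrac{1-\TV}{1+\TV}$; since $t\mapsto\tfrac{1-t}{1+t}$ is strictly decreasing, the hypothesis $\JW(x,y)=\tfrac{1-p}{1+p}$ forces $p=\TV(x,y)$. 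Hence $1-p=\sum_i\min(x_i,y_i)$ and $1+p=\sum_j\max(x_j,y_j)$, and the claimed sandwich is exactly $\JW(x,y)\le\JP(x,y)\le\sum_i\min(x_i,y_i)$.

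For the upper bound on $\JP(x,y)_i=\bigl(\sum_j\max(x_j/x_i,y_j/y_i)\bigr)^{-1}$, I would bound the denominator below by $\max\bigl(\sum_j x_j/x_i,\ \sum_j y_j/y_i\bigr)=\max(1/x_i,1/y_i)=1/\min(x_i,y_i)$, so $\JP(x,y)_i\le\min(x_i,y_i)$; summing gives $\JP(x,y)\le 1-p$. For the lower bound, I would multiply numerator and denominator of $\JP(x,y)_i$ by $\min(x_i,y_i)$: the denominator becomes $\sum_j\max(c_x x_j,c_y y_j)$ with $c_x=\min(x_i,y_i)/x_i\le 1$ and $c_y=\min(x_i,y_i)/y_i\le 1$, which is at most $\sum_j\max(x_j,y_j)$; hence $\JP(x,y)_i\ge\min(x_i,y_i)/\sum_j\max(x_j,y_j)$, and summing yields $\JP(x,y)\ge\JW(x,y)$. (Alternatively the lower bound is immediate from the representation in the Intuition section: $\JP(x,y)=\max_\alpha\JW(x,y,\alpha)\ge\JW(x,y,\mathbf{1})=\JW(x,y)$.)

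For tightness I would give two one-parameter families, both indexed by $p\in[0,1]$, so that every value of $\JW\in[0,1]$ is realized. For the lower bound take $x=(p,1-p,0)$ and $y=(0,1-p,p)$: the only common-support index is the middle one, and $\sum_j\max(x_j/x_2,y_j/y_2)=\tfrac{p}{1-p}+1+\tfrac{p}{1-p}=\tfrac{1+p}{1-p}$, so $\JP(x,y)=\tfrac{1-p}{1+p}=\JW(x,y)$. For the upper bound take the two-element distributions $x=\bigl(\tfrac{1-p}{2},\tfrac{1+p}{2}\bigr)$ and $y=\bigl(\tfrac{1+p}{2},\tfrac{1-p}{2}\bigr)$: by Lemma~\ref{sorted_consequences}.1 both outer-sum terms equal their $\min$, so $\JP(x,y)=2\min\bigl(\tfrac{1-p}{2},\tfrac{1+p}{2}\bigr)=1-p$, while $\JW(x,y)=\tfrac{1-p}{1+p}$.

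I do not anticipate a hard step: once the two per-term estimates are in hand, everything is bookkeeping. The points that require care are the normalization hypothesis — recognizing that $p$ must be the total-variation distance so that $1-p=\sum_i\min(x_i,y_i)$, and that the statement genuinely needs $x,y$ to be probability distributions — and verifying that the two witnessing families are actually tight, which the elementary computations above handle.
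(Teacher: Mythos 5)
Your proof is correct, and it takes a noticeably different route from the paper's on two of the three components. The lower bound $\JP\ge\JW$ is essentially the paper's argument in different clothing: the paper writes $\JW(x,y)_i=\bigl(\sum_j\max(x_j,y_j)/\min(x_i,y_i)\bigr)^{-1}$ and enlarges the max inside the denominator, which is exactly your rescaling by $c_x,c_y\le 1$. For the upper bound, however, the paper does not prove a per-term estimate at all: it appeals to $\JP$ being a collision probability (Theorem~\ref{Collision_Probability}) and invokes the total-variation ceiling $1-\TV(x,y)$ on the collision probability of any coupling. Your direct computation $\sum_j\max(x_j/x_i,y_j/y_i)\ge\max(1/x_i,1/y_i)=1/\min(x_i,y_i)$ is more elementary and self-contained (the paper states this same inequality without proof later, in the metric theorem), at the cost of not explaining \emph{why} $1-p$ is the natural ceiling. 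For tightness, the paper transforms an \emph{arbitrary} pair $(x,y)$ --- splitting off the excess mass onto fresh disjoint coordinates for the lower bound, and reallocating it proportionally over a partition for the upper bound --- and then collapses via Lemma~\ref{sorted_consequences}.2 to the two- and three-point distributions $(0,1-p,p),(p,1-p,0)$ and $(|X|+p,|Y|),(|X|,|Y|+p)$; you simply exhibit those small families directly. Since the theorem only asserts existence of extremal pairs for each value of $\JW$, your explicit families suffice and are cleaner, though the paper's construction carries the extra information that \emph{every} pair can be pushed to either extreme while holding $\JW$ fixed. Finally, your observation that the upper bound genuinely requires $x,y$ to be normalized (your $x=(2,2)$, $y=(1,0)$ example gives $\JP=1/2>2/5=1-p$) is a worthwhile precision that the paper leaves implicit in its standing assumption that the inputs are probability distributions.
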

\begin{proof}
The lower bound becomes clear by rewriting $\JW$ in a similar form to $\JP$.
\begin{align*}
\JW(x, y) &= \sum_{i} \frac{1}{\sum_{j} \frac{\max\left(x_j, y_j\right)}{\min\left(x_i, y_i\right)}}\\
\sum_{i} \frac{1}{\sum_{j} \max\left(\frac{x_j}{x_i}, \frac{y_j}{y_i}\right)} &\geq \sum_{i} \frac{1}{\sum_{j} \max\left(\frac{x_j}{x_i}, \frac{y_j}{y_i}, \frac{x_j}{y_i}, \frac{y_j}{x_i}\right)}\\
\JP(x,y) &\geq \JW(x,y)
\end{align*}
To achieve this lower bound, we can transform the distributions by moving the ``excess" mass to new elements.
\begin{gather*}
	x'_{2i} = y'_{2i} = \min\left(x_i, y_i\right)\\
    x'_{2i+1} = \max\left(x_i - y_i, 0\right), \quad y'_{2i+1} = \max\left(y_i - x_i, 0\right)
\end{gather*}
Shifting the mass in this way has no effect on $\JW$, but it decreases $\JP$ to equal $\JW$. $x'$ and $y'$ can be expressed as linear combinations over the three sets of indices, so using lemma \ref{sorted_consequences}.2, $\JP(x',y') = \JP\left((0, 1-p, p),(p, 1-p, 0)\right) = \frac{1-p}{1+p}$

To achieve the upper bound, $1-p$, consider inverting this transformation, reallocating the $p$ extra mass to maximize $\JP(x'',y'')$ while holding $\JW(x'',y'')$ constant. To avoid increasing $\JW$, we must add the mass to disjoint elements, so divide the indices into two sets, $X,Y$. We find that if we distribute the mass proportional to the original value, $\JP$ reaches the total variation limit regardless of the choice of $X,Y$. Let $|X| \coloneqq \sum_{i\in X}\min(x_i, y_i)$.
\begin{align*}
\forall i \in X,\enskip x''_i &= \min\left(x_i, y_i\right) \frac{|X|+p}{|X|},\enskip y''_i = \min\left(x_i, y_i\right) \\
\forall i \in Y,\enskip y''_i &= \min\left(x_i, y_i\right) \frac{|Y|+p}{|Y|},\enskip x''_i = \min\left(x_i, y_i\right)
\end{align*}
We can express this as a linear combination of two distributions with disjoint support. $\JP(x'',y'') = \JP\left((|X|+p,|Y|), (|X|,|Y|+p)\right) = 1-p$. Since $p$ is the total variation distance of $x$ and $y$, $1-p$ is the maximum collision probability that is possible between two distributions in any context, so it is the upper bound here as well.
\end{proof}
This gives us some insight into how $\JP$ and $\JW$ differ. $\JP$ ranks distributions as more similar than $\JW$ if their extra mass is on elements that both distributions share. 

Like $1 - \JW$, $1 - \JP$ is a metric on probability distributions.

\begin{theorem} $1 - \JP$ is a proper metric on $\mathcal{P}$ where $\mathcal{P}(\Omega)$ is the space of probability distributions over a finite set $\Omega$.
\end{theorem}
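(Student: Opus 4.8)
The plan is to check the metric axioms for $d(x,y) \coloneqq 1 - \JP(x,y)$ one at a time, leaning throughout on the fact that the whole family $\{H(x)\}_{x \in \mathcal{P}(\Omega)}$ is built from a single source of randomness, the hash $h$. On that one probability space, $H(x)$, $H(y)$ and $H(z)$ are simultaneously defined random variables, and Theorem \ref{Collision_Probability} lets us read $d(x,y) = \Pr[H(x) \neq H(y)]$ as a genuinely coupled quantity rather than a number attached to an isolated pair. The well-known fact that $(A,B) \mapsto \Pr[A \neq B]$ is a pseudometric on random variables living on a common space then does almost all of the work; the only distribution-specific input needed is that $\JP$ separates points.

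Non-negativity, boundedness and symmetry are immediate: $\JP(x,y) = \Pr[H(x) = H(y)] \in [0,1]$ gives $d \ge 0$ (indeed $d \le 1$), and the formula (\ref{JP_double_sum_formula}) is unchanged under interchanging $x$ and $y$, so $d(x,y) = d(y,x)$. For the identity of indiscernibles, a one-line computation gives $\JP(x,x) = \sum_i \frac{1}{\sum_j x_j/x_i} = \sum_i x_i = 1$, hence $d(x,x) = 0$; conversely, if $d(x,y) = 0$ then $\Pr[H(x) \neq H(y)] = 0$, so for each $i$, $x_i = \Pr[H(x)=i] = \Pr[H(x)=i,\,H(y)=i] \le \Pr[H(y)=i] = y_i$, and the reverse inequality follows by symmetry, giving $x = y$. (One could instead quote Theorem \ref{Jaccard_bounds}: $\JP(x,y) \le 1 - \TV(x,y)$, so $\JP(x,y)=1$ forces $\TV(x,y) = 0$.)

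The triangle inequality is the only step with real content, and even it reduces to a union bound once the common coupling is exploited: transitivity of equality gives the event inclusion $\{H(x) \neq H(z)\} \subseteq \{H(x) \neq H(y)\} \cup \{H(y) \neq H(z)\}$, whence $d(x,z) = \Pr[H(x)\neq H(z)] \le \Pr[H(x)\neq H(y)] + \Pr[H(y)\neq H(z)] = d(x,y) + d(y,z)$. The main thing to be careful about is not an inequality at all but a measurability point: I would first record that $H$ is almost surely well defined, since the $-\log h(i)/x_i$ are continuous and hence tie with probability zero, so that all the events above are honest events on a single sample space and $\argmin$ is unambiguous up to a null set. With that in place the three axioms go through as described.
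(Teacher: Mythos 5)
Your proof is correct and follows essentially the same route as the paper: symmetry from the formula, separation of points from the coupling (equivalently the bound $\JP(x,y)\le\sum_i\min(x_i,y_i)$, which is what the paper uses), and the triangle inequality via the event inclusion $\{H(x)\neq H(z)\}\subseteq\{H(x)\neq H(y)\}\cup\{H(y)\neq H(z)\}$ and the union bound on the common probability space. The remark that ties occur with probability zero, so that $H$ is almost surely well defined, is a small point of rigor the paper leaves implicit.
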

\begin{proof}
Symmetry is obvious. Non-degeneracy over $\mathcal{P}$ follows from $\frac{1}{\sum_{j \in \Omega} \max\left(\frac{x_j}{x_i}, \frac{y_j}{y_i}\right)} \le \min(x_i, y_i)$. The triangle inequality follows from being a collision probability, $1 - \JP(x,y) = \Pr\left[H(x) \neq H(y)\right]$. Therefore, 
\begin{align*}
\Pr\left[H(x) = H(y)\right] \ge \Pr\left[H(x) = H(z) \land H(y) = H(z)\right] \\
\Pr\left[H(x) \neq H(y)\right] \le \Pr\left[H(x) \neq H(z) \lor H(y) \neq H(z)\right]
\end{align*}
for any distribution $z$. But by the union bound,
\begin{align*}
\begin{split}
\Pr\big[&H(x) \neq H(z) \lor H(y) \neq H(z)\big] \\
&\le \Pr\big[H(x) \neq H(z)\big] + \Pr\big[H(y) \neq H(z)\big]\quad\qedhere
\end{split}
\end{align*}
\end{proof}

\section{Hashing on Dense and Continuous Data}

\begin{algorithm}[t]
\DontPrintSemicolon
\SetKwInOut{Input}{input}\SetKwInOut{Output}{output}
\SetKwRepeat{Do}{do}{while}
\Input{sample space $\Omega$, \\ sigma-finite measure $\mu$,\\ \textit{proposal} sigma-finite measure $\lambda$, \\ finite upper bound, $B \coloneqq \max(\mu(i)/\lambda(i))$\\ shared random seed $s$}
\SetKw{KwWhere}{where}
\Output{Stable sample from $(\Omega, \mathcal{F}, \mu)$}
$(M, X^*, k, e_{-1}) \leftarrow (\inf, \text{null}, 0, 0)$\;
\Do{$M > e_k / B$}{
$e_k \leftarrow - \log\left(h(k, s)\right)/\lambda(\Omega) + e_{k-1} $\;
$X_k \leftarrow \text{Sample}(\lambda(\Omega), s)$\;
$M_k \leftarrow e_k \lambda(X_k)/\mu(X_k)$\;
\If{$M_k < M$} {
$M \leftarrow M_k$\;
$X^* \leftarrow X_k$\;
}
$\Omega \leftarrow \Omega \setminus X_k$\;
$k \leftarrow k+1$\;
}
\Return{$X^*$}
\caption{Dense and Continuous $\sP$-MinHash. ``Global-Bound" A* Sampling \cite{astar} with a fixed seed.}
\label{p-minhash-algorithm-dense}
\end{algorithm}
The algorithm we've presented so far is suitable for sparse data such as documents or graphs. It is linear in the number of non-zeros, equivalent to \citeauthor{wminhash} \citeyear{wminhash} \cite{wminhash}. On dense data (such a image feature histograms\cite{denseminhash}) there's significant overlap in the supports of each distribution, so rehashing each element for every distribution wastes work. With a shared stream of sorted hashes, we expect the hash we select for each distribution to be biased towards the beginning of the stream, and closer to the beginning when the data is denser. Therefore one might expect that we could improve performance by searching only some prefix of the stream to find our sample.

A* Sampling (\citeauthor{astar} \citeyear{astar}\cite{astar}) explores this idea thoroughly, and we lean on it heavily in this section. In particular, we use their ``Global Bound'' algorithm, and essentially just run it with a fixed random seed (algorithm \ref{p-minhash-algorithm-dense}.) We leave the proof of running time and of correctness as a sampling method to that work, and limit our discussion to the proof of the resulting collision probability. (Their derivation uses Gumbel variables and maxima. We use exponential variables and minima to make the continuity with the rest of our work clear, which is achieved by a simple change of variables.)

The key insight of A* Sampling is that when a (possibly infinite) stream of independent exponential random variables is ordered, the exact marginal distributions of each variable can be computed as a function of the rank and the previous variables. If the vector of sorted exponential variables is $e$ with corresponding parameter vector $x$, then once $e_1,\ldots,e_{k-1}$ are all known, the distribution of $e_k$ is a truncated exponential with rate $|x \setminus \bigcup_{i=1}^{k-1} x_i|$ truncated from below at $e_{k-1}$. The ``statelessness" of exponential variables makes this truncation easy to accomplish. Simply generate the desired exponential, and add $e_{k-1}$ to shift it.

\begin{figure*}
\hfill
    %4.86%
    %0.5 * 0.95%
    \begin{subfigure}[t]{0.40358391608\textwidth}
        \centering
        \includegraphics[width=\textwidth]{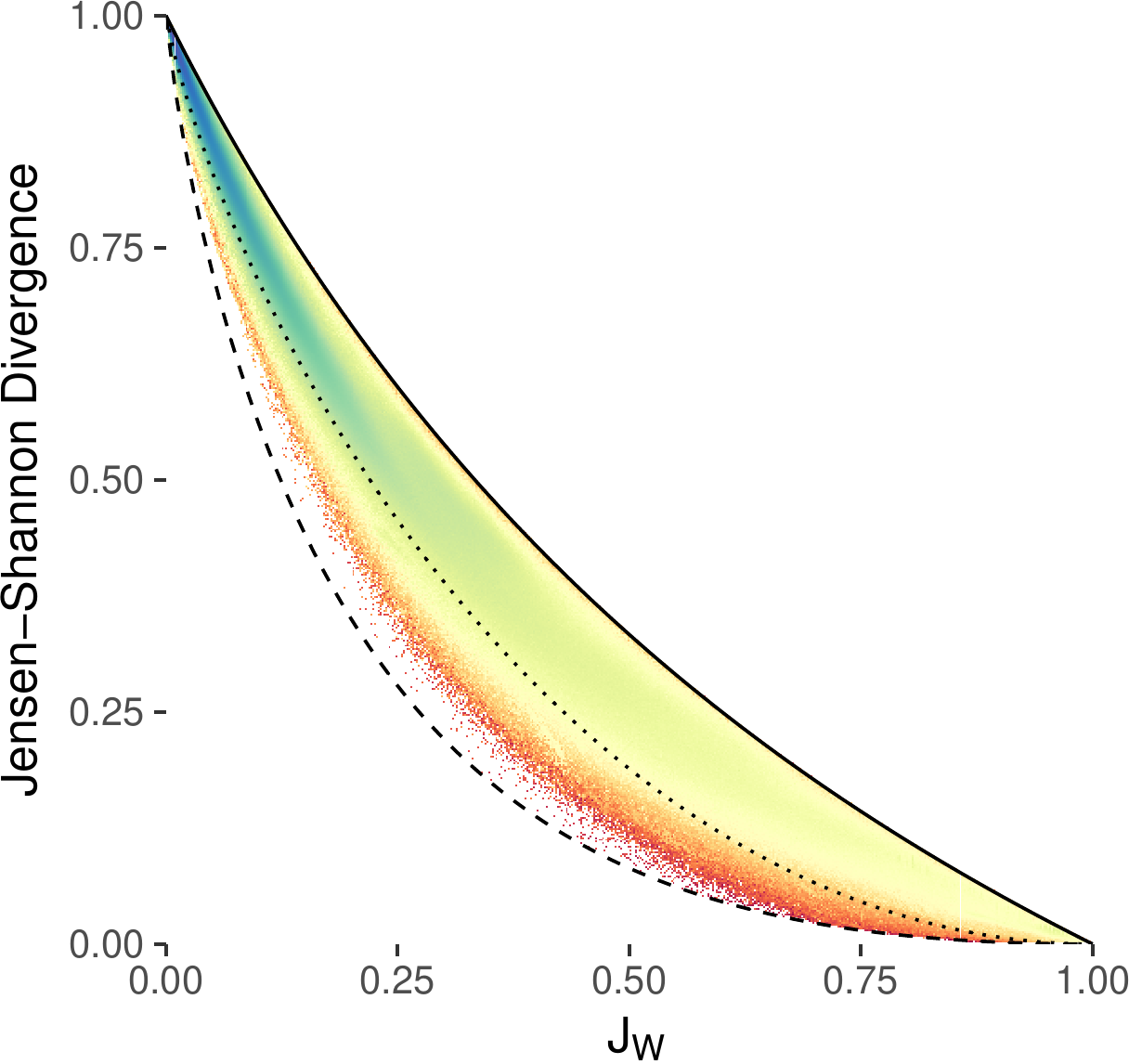}
    \end{subfigure}
    \hfill
    %5.88 in%
    %5.96 in%
    %0.5 * 0.95%
    \begin{subfigure}[t]{0.4949300699\textwidth}
        \centering
        \includegraphics[width=\textwidth]{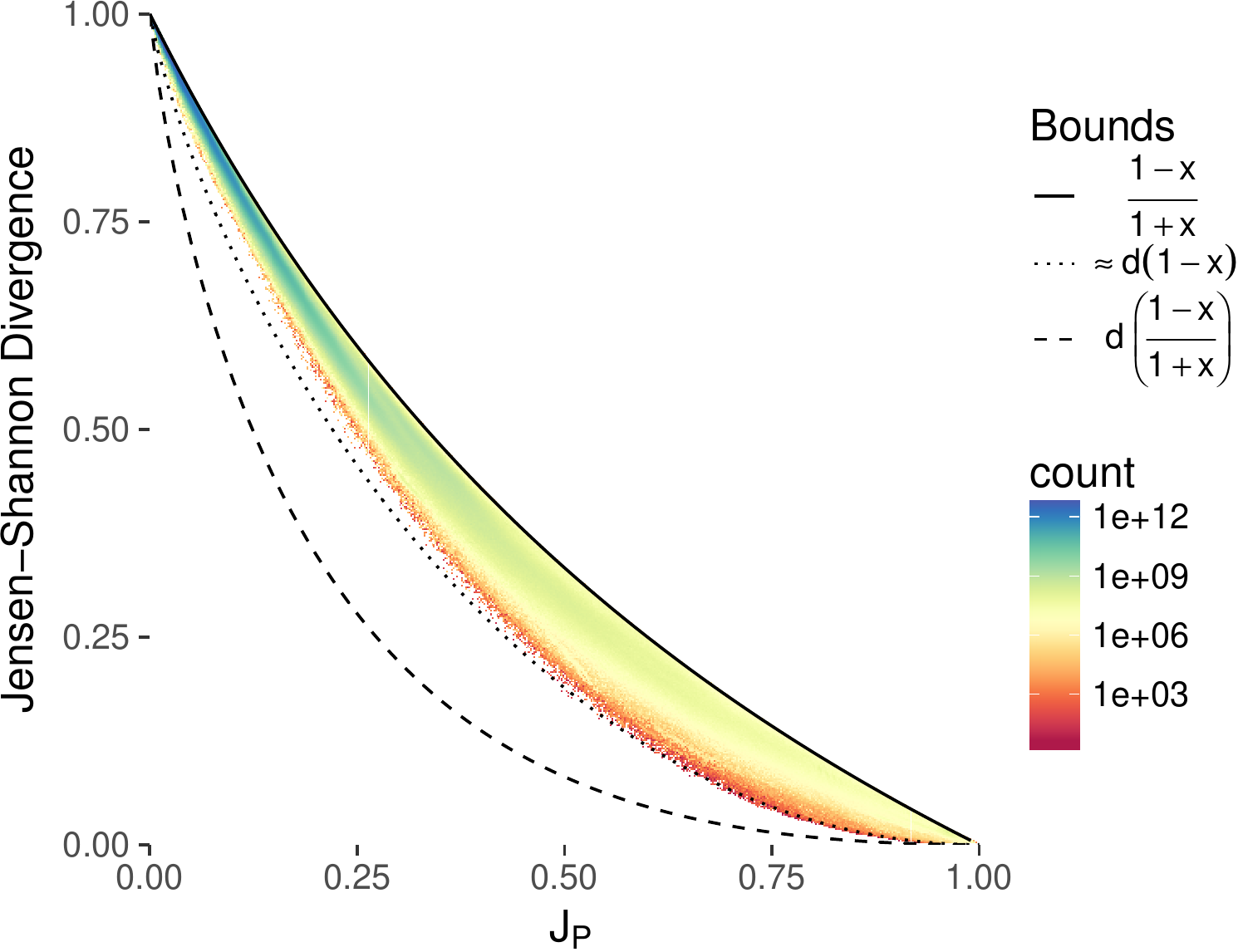}
    \end{subfigure}
    \hfill
    \caption{The Jensen-Shannon divergence compared with $\JP$ and $\JW$ on pairs of normalized unigram term vectors of random web documents. JSD has a much tighter relationship with $\JP$ than $\JW$. We show exact bounds for $\JW$ against JSD and approximate bounds for $\JP$ against JSD where $d(x) = \frac{(1-x)}{2}\log_2(1 - x) + \frac{(1 + x)}{2}\log_2(1 + x)$. The curve that appears to lower bound JSD against $\JP$ is violated on $10^{-7}$ of the pairs.}
\label{jsd-dist}
\end{figure*}

\begin{figure*}
    \centering
    %5.72 in%
    %0.34416365824*0.95%
    \begin{subfigure}[t]{0.475\textwidth}
        \centering
        \includegraphics[width=\textwidth]{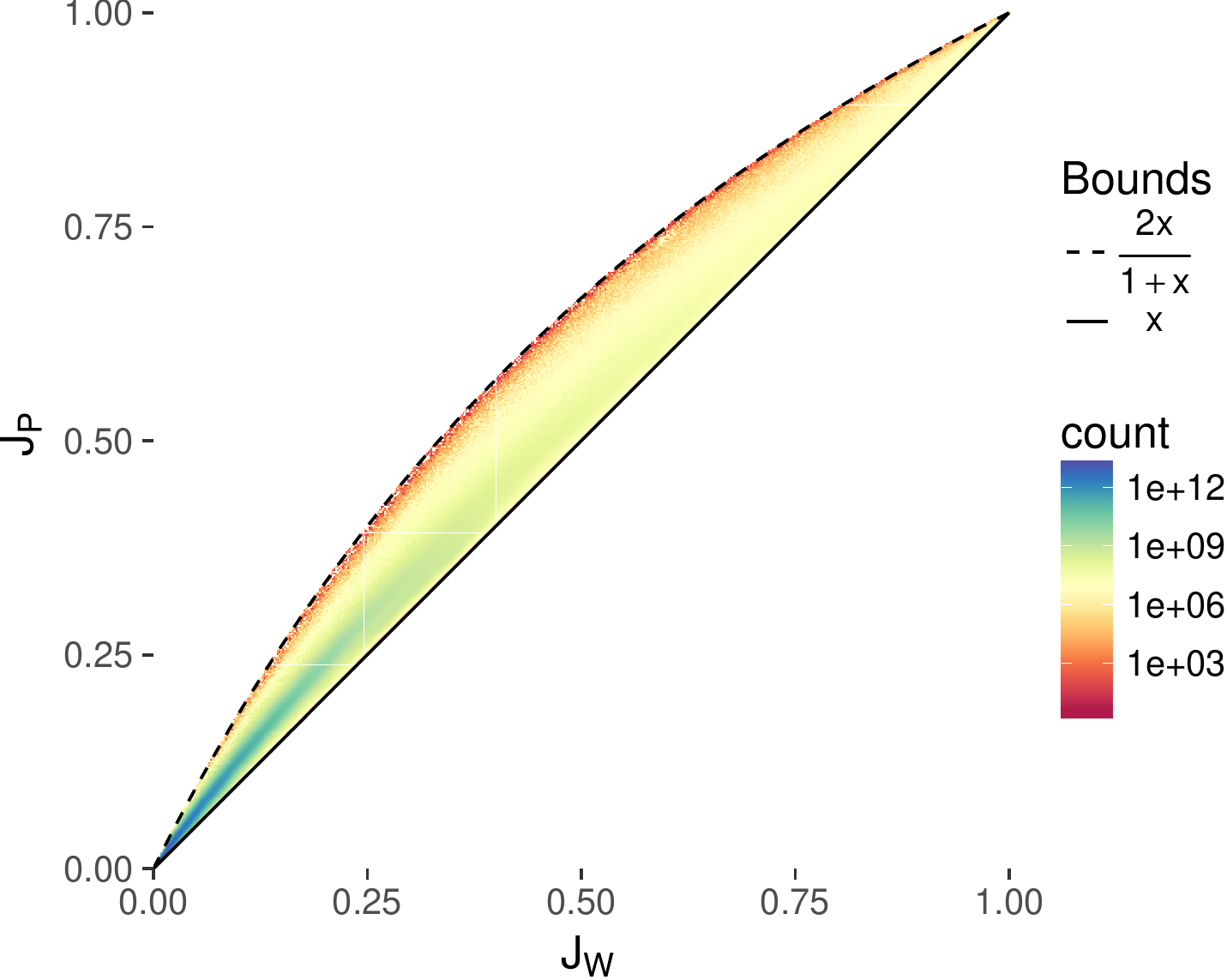}
    \end{subfigure}\hfill
    %5.71 in%
    %0.34356197352 * 0.95%
    \begin{subfigure}[t]{0.47416958042\textwidth}
        \centering
        \includegraphics[width=\textwidth]{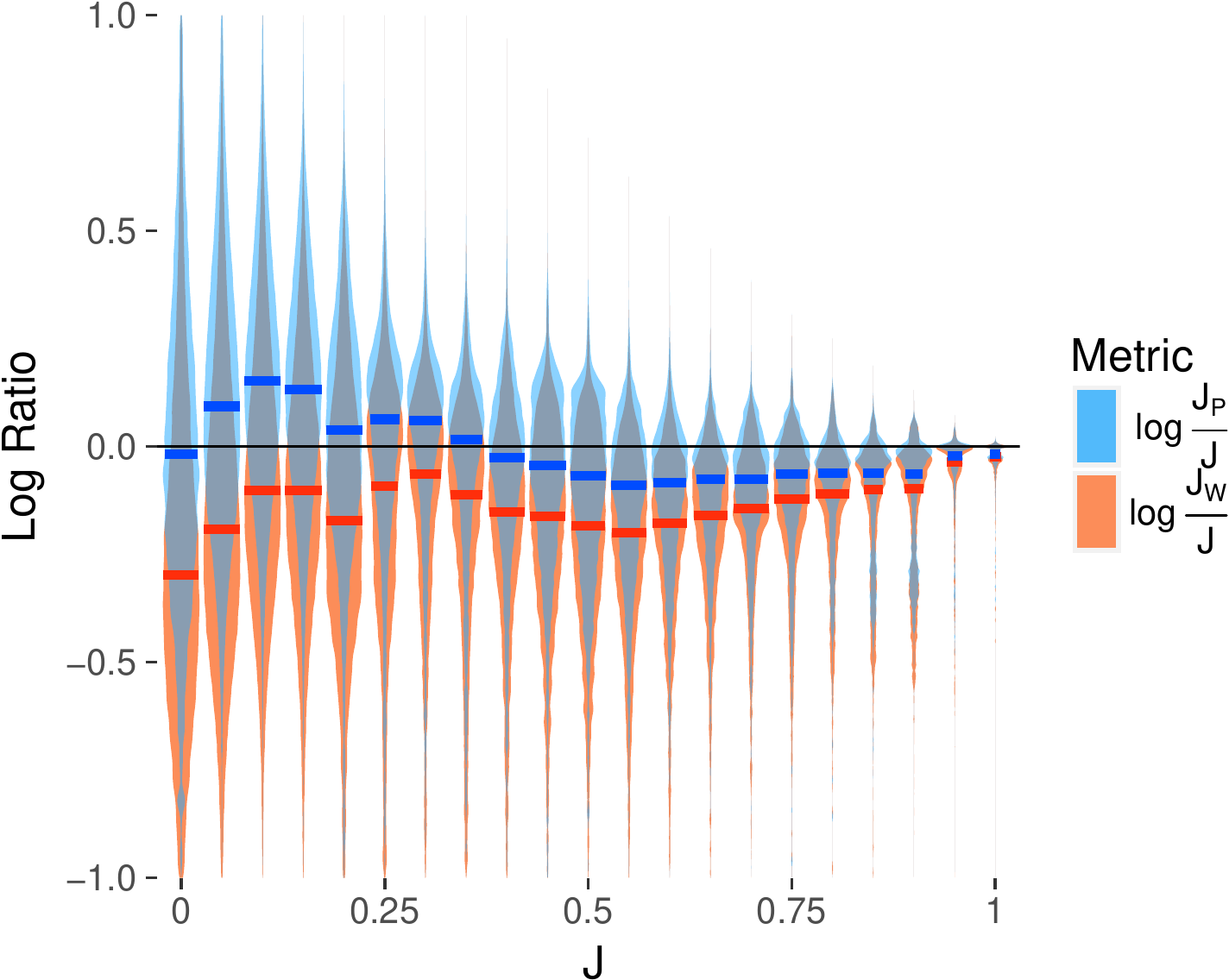}
    \end{subfigure}
    
    \caption{Comparison of Jaccard measures on normalized unigram term vectors of pairs of random web documents. The left graph shows the joint distribution of $\JP$ and $\JW$ and the bounds we prove. The right graph shows the conditional distributions of $\JP$ and $\JW$ against the (set) Jaccard index of the terms. We show the distribution of the log of their ratios against the Jaccard index and highlight the median. $\JP$ is generally centered around the Jaccard index, while $\JW$ is consistently centered below, as predicted by their behavior on uniform distributions.}
    \label{jp-jw-dist}
\end{figure*}

\begin{figure*}
    \centering
    \begin{subfigure}[t]{0.49\textwidth}
        \centering
		Precision/Recall on $\text{JSD}< 0.25$
		\par\medskip
        \includegraphics[width=\textwidth]{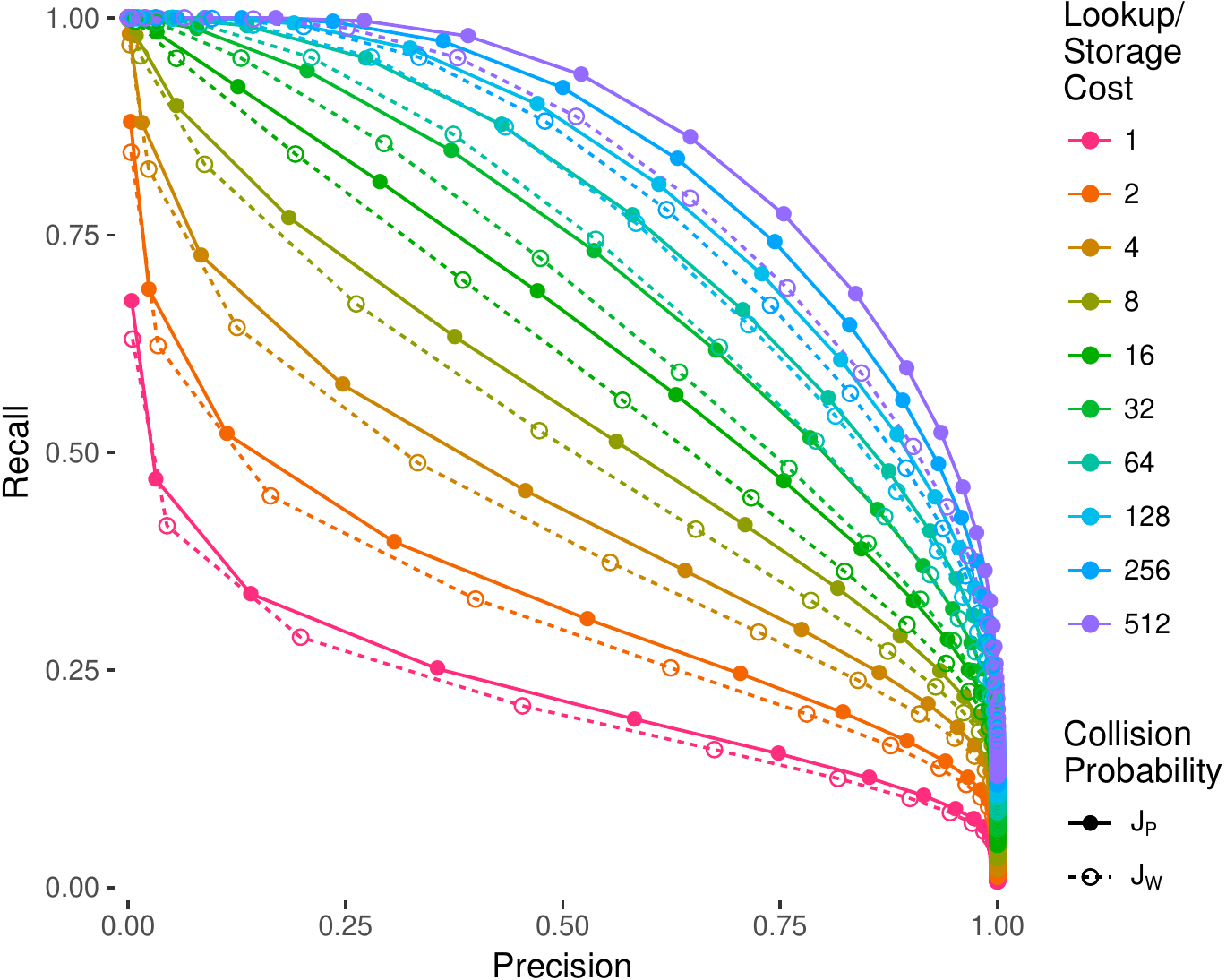}
        \caption{Performance on retrieving pairs with Jensen-Shannon divergence less than 0.25. $\JP$ performs better both by having a tighter relationship with JSD, and by having higher match probabilities overall, making high recall cheaper to achieve. $\sP$-MinHash achieves the same performance with 64 hashes that a $\sW$-MinHash achieves with 128.}
    \end{subfigure}\hfill
    \begin{subfigure}[t]{0.49\textwidth}
        \centering
        Precision/Recall on $\JW > 0.5$
        \par\medskip
        \includegraphics[width=\textwidth]{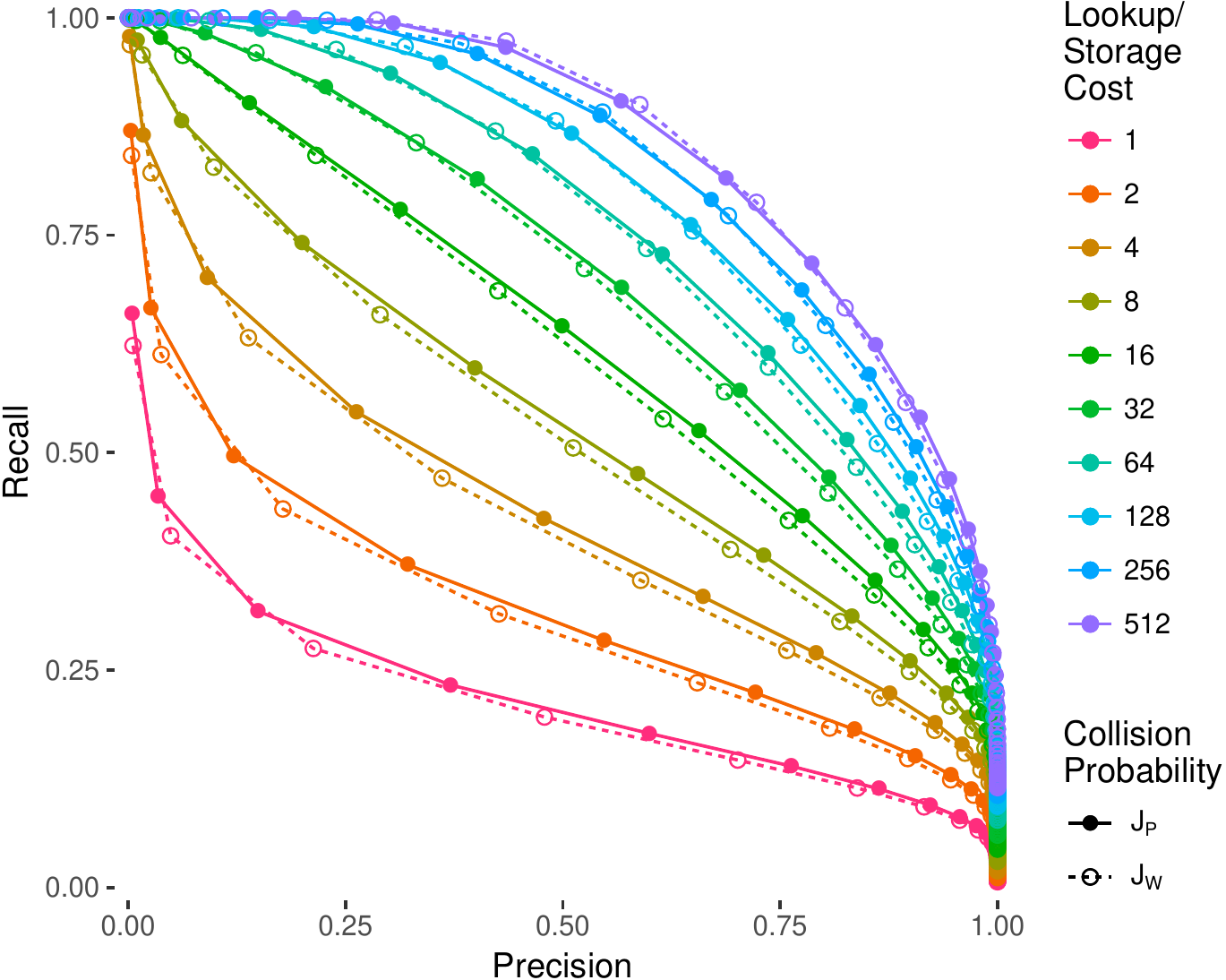}
        \caption{Performance on retrieving pairs with $\JW$ greater than 0.5. The fact that  $\sP$-MinHash slightly outperforms $\sW$-MinHashes illustrates the benefit of higher match probabilities. $\JP$ is higher than $\JW$ for pairs that are similar under $\JW$, which makes high recall cheaper. As the cost increases, $\JW$ overtakes $\JP$ as the difference between the two scores becomes the larger factor. }
    \end{subfigure}
    
    \caption{Precision/recall curves illustrating the typical case of retrieval using a key-value store. Each point represents outputting $o$ independent sums of $a$ hashes each, for a collision probability of $1-(1-p^a)^o$. The cost in storage and CPU is dominated by $o$, so we connect these points to show the trade-offs possible at similar cost.}
    \label{precrec}
\end{figure*}

To change the parameters of those exponentials and find the new minimum element, only a small prefix of the list must be examined. The running time of finding the new minimum is a function of the difference between the two vectors of parameters, and has equivalent running time to rejection sampling. This gives algorithm \ref{p-minhash-algorithm-dense} equivalent running time to the state of the art for computing a MinHash of dense data, \citeauthor{denseminhash} \citeyear{denseminhash}\cite{denseminhash}. In addition to admitting an unbounded list of elements, algorithm \ref{p-minhash-algorithm-dense} is also applicable to continuous distributions, as the paper\cite{astar} describes in detail. 

Let's first show that algorithm \ref{p-minhash-algorithm-dense} gives $\JP(\mu, \nu)$ when $\Omega$ is \emph{finite}. Indeed, this construction is simply an alternative way of finding the minimum $-\log U_i / \mu_i$. A* sampling merely reads off the minimum of $-\log U_i /\lambda_i * \lambda_i/ \mu_i = -\log U_i / \mu_i$, and similar for $\nu$. 

To prove the general case for infinite $\Omega$, we need to first define what we mean by $\JP(\mu, \nu)$ in that setting. One option is to replace all the summation by integrals in the formula ~\eqref{JP_double_sum_formula}. This runs into two difficulties however:
\begin{enumerate}
\item A probability space $\Omega$ may not be a subset of $\R^n$.
\item Either $\mu$ or $\nu$ could be singular.
\end{enumerate}

Instead, we define it as a limit over increasingly finer finite partitions of $\Omega$. More formally,
\begin{definition}
Assume $\JP(\mu, \nu)$ is defined as before when $|\Omega| < \infty$, we define 
\begin{align}
\JP(\mu, \nu) = \inf_{\mathcal{F} \vdash \Omega} \JP(\mu_\mathcal{F}, \nu_\mathcal{F}),
\end{align}
where $\mathcal{F}$ ranges over finite partitions of the space $\Omega$, and $\mu_\mathcal{F}$ denotes the push-forward of $\mu$ with respect to the map $\pi: \Omega \to \mathcal{F}$, $\pi(x) = Q \in \mathcal{F}$ iff $x \in Q$. ($\mu_\mathcal{F}$ is simply a coarsified probability measure on the finite space $\mathcal{F}$ where it (tautologically) assigns probability $\mu(Q)$ to the element $Q \in \mathcal{F}$.)
\end{definition}
First we verify that the definition above coincides with $\JP$ when $\Omega$ is finite. 
\begin{lemma}
Let $|\Omega'| +1 = |\Omega|  = n$, $\mu, \nu \in \mathcal{P}(\Omega)$, and $\mu', \nu' \in \mathcal{P}(\Omega')$ obtained by merging the last two elements of $\Omega$ into a single element. Then $\JP(\mu, \nu) \le \JP(\mu', \nu')$, with strict inequality if both $\mu, \nu$ have nonzero masses on those two elements.
\end{lemma}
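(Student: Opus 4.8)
The plan is to fix the two elements being merged, call them $a$ and $b$, set $S=\Omega\setminus\{a,b\}$, write $x=\mu$, $y=\nu$, $x'=\mu'$, $y'=\nu'$, and compare $\JP(x,y)$ with $\JP(x',y')$ term by term via the representation \eqref{splitindices}. Let $t_i=x_i/y_i$ denote the likelihood ratios (with the standard conventions when a mass vanishes) and set $f(t)=\sum_{j\in\Omega}\max(x_j,y_j t)$, $g(t)=t/f(t)$. Then \eqref{splitindices} is exactly the statement $\JP(x,y)_i=x_i/f(t_i)=y_i\,g(t_i)$ for every $i$ in the common support, and the same identity holds for the merged pair with $f$ replaced by $f'(t)=\sum_{j\in S}\max(x_j,y_j t)+\max\!\big(x_a+x_b,(y_a+y_b)t\big)$ and with the merged ratio $t_\ast=(x_a+x_b)/(y_a+y_b)$, which is precisely the $y$-weighted average of $t_a$ and $t_b$.

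Granting this reformulation, the proof reduces to two inequalities. First, for each $i\in S$, $\JP(x,y)_i\le\JP(x',y')_i$, equivalently $f'(t_i)\le f(t_i)$, which is just subadditivity of the maximum: $\max(x_a,y_a t_i)+\max(x_b,y_b t_i)\ge\max\!\big(x_a+x_b,(y_a+y_b)t_i\big)$. Second, the merged terms obey $\JP(x,y)_a+\JP(x,y)_b\le\JP(x',y')_\ast$; here I would first note $f'(t_\ast)\le f(t_\ast)$ — because $\max(x_a,y_a t_\ast)+\max(x_b,y_b t_\ast)\ge x_a+x_b=\max\!\big(x_a+x_b,(y_a+y_b)t_\ast\big)$ — so that $\JP(x',y')_\ast=(x_a+x_b)/f'(t_\ast)\ge(y_a+y_b)\,g(t_\ast)$, and then invoke concavity of $g$ at the points $t_a,t_b$ with weights $y_a,y_b$ to conclude $(y_a+y_b)\,g(t_\ast)\ge y_a\,g(t_a)+y_b\,g(t_b)=\JP(x,y)_a+\JP(x,y)_b$. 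Summing over all indices then yields $\JP(\mu,\nu)\le\JP(\mu',\nu')$.

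The only substantive step is the concavity of $g$. Each summand $t\mapsto\max(x_j,y_j t)$ is convex, nondecreasing and piecewise linear, hence so is $f$, and $f\ge\sum_j x_j=1>0$. On any linear piece $f(t)=c+mt$ with $c,m\ge 0$ one gets $g'(t)=c/(c+mt)^2\ge 0$ and $g''(t)=-2cm/(c+mt)^3\le 0$, while at a breakpoint $t_0$ the intercept $c$ only drops as $t$ increases, so $g'(t_0^+)\le g'(t_0^-)$; thus $g$ is nondecreasing with a nonincreasing one-sided derivative, i.e.\ concave and monotone. (Monotonicity is what one uses in place of the two-point Jensen step in the degenerate case $y_a=0$ or $y_b=0$, since then $t_\ast\ge\min(t_a,t_b)$ suffices.) I expect this breakpoint bookkeeping, together with the routine care needed for indices outside the common support — where $t_i\in\{0,\infty\}$ and $\max(x_i,y_i t)$ becomes constant or a pure ray — to be the only fussy part; everything else is rearrangement.

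For the strict inequality: if $\mu$ and $\nu$ both assign positive mass to $a$ and to $b$, and moreover $t_a\ne t_b$, then $t_\ast$ lies strictly between $t_a$ and $t_b$, so exactly one of $\max(x_a,y_a t_\ast)$, $\max(x_b,y_b t_\ast)$ strictly exceeds the corresponding $x$-mass, giving $f'(t_\ast)<f(t_\ast)$ and hence a strict inequality $\JP(x,y)_a+\JP(x,y)_b<\JP(x',y')_\ast$, so $\JP(\mu,\nu)<\JP(\mu',\nu')$. When instead $t_a=t_b$ one is merging two cells of equal likelihood ratio and obtains equality (in agreement with Lemma~\ref{sorted_consequences}.2), so the strict conclusion genuinely requires the two merged masses to be non-proportional.
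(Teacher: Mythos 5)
Your proof is correct, but it takes a genuinely different route from the paper's. The paper's argument is probabilistic and essentially one line: view $\JP$ as the collision probability of the argmins of coupled exponential clocks, realize the clock of the merged cell as the minimum of the two original clocks (the minimum of independent exponentials is exponential with the summed rates), and observe that under this coupling a collision for $(\mu,\nu)$ forces a collision for $(\mu',\nu')$. Your argument is instead analytic and term-by-term: writing $\JP(x,y)_i=y_i\,g(t_i)$ with $t_i=x_i/y_i$ and $g(t)=t/f(t)$ as in \eqref{splitindices}, you reduce the unmerged indices to subadditivity of the maximum (so $f'\le f$ pointwise) and the merged index to monotonicity and concavity of $g$, which you correctly derive from $f$ being convex, nondecreasing and piecewise linear with intercept that only drops across breakpoints. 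Your route buys two things the paper's sketch does not make explicit: first, it sidesteps the delicate point that in the coupling argument $\min(e_a,e_b)$ and $\min(f_a,f_b)$ are \emph{not} perfectly correlated at the merged index, so the coupled process is not literally the $\sP$-MinHash of $(\mu',\nu')$ and the comparison of its collision probability with $\JP(\mu',\nu')$ still requires an argument; second, it isolates the exact condition for strictness, namely $\mu_a/\nu_a\ne\mu_b/\nu_b$ in addition to positivity of the four masses. That observation is a genuine (minor) correction to the statement: as you note, and as the proof of Lemma~\ref{sorted_consequences}.2 already shows, merging two cells with equal likelihood ratios leaves $\JP$ unchanged, so the lemma's strictness claim needs the non-proportionality hypothesis. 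The price of your approach is the breakpoint and degenerate-ratio bookkeeping, all of which you handle adequately.
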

\begin{proof}
By considering $\JP(\mu, \nu)$ as the probability that the argmin's of two lists of independent exponentials land on the same index $1 \le i_* \le n$, and using the fact that the minimum of two independent exponentials is an exponential with the sum of the rate parameters, we can couple the four argmin's arising from $\mu, \nu, \mu', \nu'$ and conclude by inspection.
\end{proof}
The lemma shows that any partition of $\Omega$ will lead to a $\JP$ that's greater than or equal to the original $\JP$. So the infimum is achieved with the most refined partition, namely $\Omega$ itself.

Finally we show that A* sampling applied to $\mu$ and $\nu$ simultaneously has a collision probability equal to $\JP(\mu, \nu)$ as defined above.
\begin{theorem}
Given two probability measures $\mu$ and $\nu$ on an arbitrary Polish space $\Omega$, both absolutely continuous with respect to a common third measure $\lambda$, it is possible to apply A* sampling with base distribution $\lambda$, either in-order or with a hierarchical partition of $\Omega$, to sample from $\mu$ and $\nu$ simultaneously. Further, the probability of the procedure terminating at the same point $p \in \Omega$ for both $\mu$ and $\nu$ is exactly $\JP(\mu, \nu)$. 
\end{theorem}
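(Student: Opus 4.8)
The plan is to prove $\Pr[H(\mu)=H(\nu)]=\JP(\mu,\nu)$ by matching inequalities, working throughout with the infimum-over-partitions definition. First I would dispense with the ``in-order or hierarchical'' clause: run with a common seed, both implementations merely locate the $\lambda$-a.s.\ unique minimizer of $p\mapsto -\log U_p/(d\mu/d\lambda)(p)$ realized through a single shared Poisson process on $\Omega\times(0,\infty)$ of intensity $\lambda\otimes\mathrm{Leb}$ (correctness being imported from \cite{astar}), so they induce the same coupling of $\mu$ and $\nu$ and leave a single quantity $P\coloneqq\Pr[H(\mu)=H(\nu)]$ to analyze.

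\emph{Upper bound.} Fix a finite partition $\mathcal F\vdash\Omega$. The $\mathcal F$-cell containing the A* output for $\mu$ is $\argmin_{Q\in\mathcal F}m^\mu_Q$, where $m^\mu_Q\coloneqq\min_{p\in Q}-\log U_p/(d\mu/d\lambda)(p)$ is $\mathrm{Exp}(\mu(Q))$-distributed and the $m^\mu_Q$ are independent across $Q$ (disjoint pieces of the Poisson process), and likewise for $\nu$; moreover $\{H(\mu)=H(\nu)\}\subseteq\{\text{same cell}\}$. Within a cell the pair $(m^\mu_Q,m^\nu_Q)$ has joint survival function $\exp(-\int_Q\max(a\tfrac{d\mu}{d\lambda},b\tfrac{d\nu}{d\lambda})\,d\lambda)$, which by convexity of $\max$ is pointwise at most $\exp(-\max(a\mu(Q),b\nu(Q)))$, the survival function of the comonotone pair $(-\log h_Q/\mu(Q),-\log h_Q/\nu(Q))$ used by $\sP$-MinHash on $\mathcal F$. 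I would then show that replacing the cell couplings one at a time by their comonotone (Fr\'echet-upper) versions never decreases $\Pr[\argmin m^\mu=\argmin m^\nu]$: a replacement at cell $Q$ changes the ``match on $Q$'' term by the response of $\mathbb E[\prod_{Q'\neq Q}S_{Q'}]$ to a more positively dependent pair, and changes every other term by the pointwise increase of $S_Q$; since each $\max$ is submodular, $\prod_{Q'}S_{Q'}=e^{-H}$ with $H$ increasing and submodular, so $\partial_a\partial_b e^{-H}=e^{-H}(H_aH_b-H_{ab})\ge 0$, i.e.\ $\prod_{Q'\neq Q}S_{Q'}$ is supermodular, whence the first change is $\ge 0$ by Tchen's inequality for the concordance order and the second is trivially $\ge 0$. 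Thus $P\le\Pr[\text{same cell}]\le\JP(\mu_{\mathcal F},\nu_{\mathcal F})$ for every $\mathcal F$, so $P\le\JP(\mu,\nu)$.

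\emph{Lower bound.} Fix a refining sequence $\mathcal F_1\prec\mathcal F_2\prec\cdots$ of finite partitions separating points and generating the Borel $\sigma$-algebra (available as $\Omega$ is Polish), and write $Q_k(x)$ for the $\mathcal F_k$-cell of $x$. Then $\{H(\mu)=H(\nu)\}=\bigcap_k\{\text{agree on }\mathcal F_k\text{-cells}\}$, so by downward continuity $P=\lim_k\Pr[A_{\mathcal F_k}]$ with $A_{\mathcal F}$ the cell-agreement event. By the merging lemma $\JP(\mu_{\mathcal F_k},\nu_{\mathcal F_k})$ is nonincreasing, and a routine approximation (any finite Borel partition is $L^1(\mu+\nu)$-close to an $\mathcal F_k$-measurable one, and $\JP$ is continuous in the cell masses) makes the sequence cofinal among partitions, hence $\JP(\mu_{\mathcal F_k},\nu_{\mathcal F_k})\downarrow\JP(\mu,\nu)$. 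It remains to show the gap $\JP(\mu_{\mathcal F_k},\nu_{\mathcal F_k})-\Pr[A_{\mathcal F_k}]\to 0$; I would get this from the martingale/Lebesgue-differentiation convergence $\mu(Q_k(x))/\lambda(Q_k(x))\to(d\mu/d\lambda)(x)$ $\lambda$-a.e.\ (and likewise for $\nu$), which forces the within-cell A* couplings to approach the comonotone ones---the inequality $\int_Q\max\ge\max\int_Q$ becoming asymptotically tight---so the nonnegative per-cell increments in the telescoping above sum to zero in the limit by dominated convergence. Combining, $P=\lim_k\Pr[A_{\mathcal F_k}]=\lim_k\JP(\mu_{\mathcal F_k},\nu_{\mathcal F_k})=\JP(\mu,\nu)$.

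The hard part is that final gap estimate: making it quantitative and uniform over the partition, and handling degenerate cells where $\mu$ and $\nu$ are mutually singular or $\lambda$ carries atoms (there those cells contribute $0$ to both $\JP(\mu_{\mathcal F_k},\nu_{\mathcal F_k})$ and $\Pr[A_{\mathcal F_k}]$ in the limit, so one falls back on the definition). An alternative that bypasses this paragraph is a direct Palm computation: conditioning on A*'s $\mu$-output $(x^\ast,T)$ with $T\sim\mathrm{Exp}(1)$, the conditional probability that $\nu$ agrees is $\exp(-|R|)$ for an explicit region $R$ lying in the still-unexplored part of the process, and integrating out $T$ gives $P=\mathbb E_{x^\ast\sim\mu}\big[(\mu\{\rho\ge\rho(x^\ast)\}+\rho(x^\ast)\,\nu\{\rho<\rho(x^\ast)\})^{-1}\big]$ with $\rho=(d\mu/d\lambda)/(d\nu/d\lambda)$; one then verifies, as a statement purely about $\JP$ and its coarsenings (apply the merging lemma to level-set partitions of $\rho$), that this integral equals $\inf_{\mathcal F}\JP(\mu_{\mathcal F},\nu_{\mathcal F})$.
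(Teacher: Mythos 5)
Your proposal takes a genuinely different route from the paper's, and its two halves fare differently. The upper bound is a self-contained argument the paper does not have: you compare the in-order A* cell-level coupling to the comonotone one, cell by cell, using the pointwise domination of survival functions $\exp\bigl(-\int_Q\max(a\,d\mu/d\lambda,\,b\,d\nu/d\lambda)\,d\lambda\bigr)\le\exp(-\max(a\mu(Q),b\nu(Q)))$ together with supermodularity of $\prod_{Q'}S_{Q'}=e^{-H}$ (with $H$ increasing and submodular) and Tchen's inequality; the telescoping replacement then gives $\Pr[\text{same cell}]\le\JP(\mu_\cF,\nu_\cF)$ for every finite partition, hence $P\le\inf_\cF\JP(\mu_\cF,\nu_\cF)=\JP(\mu,\nu)$. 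This checks out (the already-replaced cells keep the form $e^{-H_{Q'}}$ with $H_{Q'}$ increasing and submodular, so supermodularity of the ``rest'' product survives the induction) and is arguably more careful than anything in the paper's own proof. The lower bound, however, is where your proof is incomplete, and you say so yourself: the assertion $\JP(\mu_{\cF_k},\nu_{\cF_k})-\Pr[A_{\cF_k}]\to 0$ requires summing per-cell increments over a growing number of cells, and the Lebesgue-differentiation heuristic gives pointwise smallness of each increment but not the uniform or dominated control needed for the sum; neither the quantitative version nor the alternative Palm computation is carried out. That step is the crux, not a routine verification.

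The paper sidesteps this entirely by using the equivalence of in-order and hierarchical-partition A*: running the hierarchical version on $\cF$, the exponential attached to each cell $Q$ is a single shared rate-$\lambda(Q)$ variable, so the partition-level coupling is the comonotone ($\sP$-MinHash) coupling \emph{by construction} rather than by comparison. Conditional on both processes resolving at level $\cF$ (probability $\p(T)$), they collide with probability exactly $\JP(\mu_\cF,\nu_\cF)$, giving $\JP(\mu,\nu)\p(T)\le AC(U,V;\cF)\le\JP(\mu_\cF,\nu_\cF)\le\JP(\mu,\nu)+\epsilon$ and a squeeze as $\p(T)\to1$ under refinement. Importing that device closes your lower bound immediately, and the resulting hybrid (your Tchen-inequality upper bound for the in-order process, the termination squeeze for the lower bound) would be a cleaner and more complete proof than either version alone.
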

\begin{proof}
The first statement follows from the procedural definition of A* sampling described in \cite{astar}. For the second statement, since in-order A* is proven equivalent to hierarchical partition A* in \cite{astar}, we are free to choose any partition to our convenience. The natural choice is then the partition used in the definition of $\JP(\mu, \nu)$. 

More precisely, we know there is a finite partition $\cF \vdash \Omega$ such that $\JP(\mu, \nu) \leq \JP(\mu_\cF, \nu_\cF) \leq \JP(\mu, \nu) + \epsilon$, for any $\epsilon > 0$. On the other hand, for finite partition like $\cF$, the exponential variables attached to the representative of each part $Q \in \cF$ are jointly distributed as perfectly correlated exponentials with rate $\lambda(Q)$. Let $U, V$ be the two coupled A* processes restricted to $\cF$. Either one of them does not terminate, or they both terminate and collide conditionally with probability $\JP(\mu_\cF, \nu_\cF)$. In other words, letting $\p(T)$ be the probability that both terminate at $\cF$ level, and $AC(U, V; \cF)$ be the collision probability of $U, V$ restricted to $\cF$, then $ AC(U, V; \cF) = \p(T) \JP(\mu_\cF, \nu_\cF).$ Thus $\JP(\mu, \nu) \p(T) \leq AC(U, V; \cF) \leq \JP(\mu_\cF, \nu_\cF) \leq \JP(\mu, \nu) + \epsilon$.

So $AC(U, V; \cF)$ is squeezed between $\JP(\mu, \nu) \p(T)$ and $\JP(\mu, \nu) + \epsilon$. Since $\p(T) \to 1$ and $\epsilon \to 0$ under a refinement sequence $\cF$, we get in the limit
\begin{align*}
AC(U, V) := AC(U, V; \cF_\infty) = \JP(\mu, \nu).\quad\qedhere
\end{align*}
\end{proof}

\section{Utility of $\JP$ on Pairs of Web Documents}

\label{web-document-section}
To determine whether the difference between $\JP$ and $\JW$ matters in practice and whether achieving $\JP$ as a collision probability is a useful goal, we computed both for a large sample of pairs of unigram term vectors of web documents. From an index of \ifanon 6.6 billion \else the 6.6 billion highest PageRank \fi documents, we selected pairs using a sum of 2 $\sW$-MinHashes to perform importance sampling. We computed several similarity scores for 100 million pairs of normalized unigram term vectors, and weighted them by the inverse of their sampling probability to simulate an unbiased sample of all non-zero pairs.

The Jensen-Shannon divergence (JSD) defines the information loss that results from representing two distributions using a model that is an equal mixture of them, and as such is the ideal criterion to form information-preserving clusters of items of equal importance. $\text{JSD}(x,y) = \frac{1}{2}D_{\text{KL}}\left(x\Vert \frac{x+y}{2}\right) + \frac{1}{2}D_{\text{KL}}\left(y\Vert \frac{x+y}{2}\right)$. Like both $\JW$ and $\JP$ it is bounded, symmetric, and monotonic in a metric distance. Due to these properties and its popularity, we use it as a basis for comparison.

$\JP$ has a much tighter relationship with the Jensen-Shannon divergence than $\JW$ as shown in figure \ref{jsd-dist}. Tight bounds on JSD as a function of $\JW$ are given by $\JW$'s monotonic relationship with Total Variation, as described by \cite{sason2015tight}. Let $p$ be the total variation distance, and $d(p) = \frac{(1-p)}{2}\log_2(1 - p) + \frac{(1 + p)}{2}\log_2(1 + p)$. $d(p) \geq \text{JSD}(x,y) \geq p$. Substituting $\frac{1-\JW}{1+\JW} = p$ extends these to $\JW$. These same bounds apply to $\JP$ as well, but since $\JP\geq\JW$ it is always closer to the upper bound than $\JW$ and thus has a tighter relationship with what appears to be a much higher lower bound.

We have approached finding this lower bound by numerically solving an associated system of Euler-Lagrange differential equations, but small examples form good approximate bounds. On 2 element distributions, JSD has a direct relationship with $\JP$, and only $1\times 10^{-7}$ of the pairs fall below the resulting curve, $d(1-\JP)$. No pairs in our sample had JSD more than 0.0077 below it. In contrast, $\JW$ puts $7\times 10^{-3}$ of the pairs below this curve, with the farthest point 0.16 below. (Figure \ref{jsd-dist}.)

We also compare both $\JP$ and $\JW$ to the Jaccard index of the set of terms, and compute a kernel density estimate of the log of their ratios in figure \ref{jp-jw-dist}. $\JP$ is generally centered around the Jaccard index, while $\JW$ is consistently centered below, as predicted by their behavior on uniform distributions. This makes $\sP$-MinHash less disruptive as a drop-in replacement for an unweighted MinHash. Parameters of the system such as the number of hashes or the length of concatenated hashes are likely to continue to function well.

In the typical case of retrieval using a key-value store, performance is characterized by cheap ANDs and expensive ORs.\cite{datar2004locality} To reduce the collision probability we can sum multiple hashes to form keys, but to raise the collision probability, we must output multiple independent keys. This lets us apply an asymmetric sigmoid to the collision probabilities, $1 - (1 - p^a)^o$ with $o$ independent outputs of $a$ summed hashes each.  Assuming that the cost of looking up hashes dominates the cost of generating them, ANDs are essentially free, while CPU and storage cost are both linear in the number of ORs. Furthermore, as $a$ increases linearly, $o$ must increase exponentially to keep the inflection point of the sigmoid in the same place. For instance, if the sigmoid passes through $(0.5,0.5)$, then $o \approx \log(2)2^a$. This gives a significant performance advantage to algorithms with higher collision probabilities, and thus to $\JP$ over $\JW$. Lowering the probability is much cheaper than raising it.

The effect of this is demonstrated in Figure \ref{precrec}. Unsurprisingly from the tightness of the joint distribution, $\sP$-MinHash achieves better precision and recall retrieving low JSD documents for a given cost. More surprising is that it also achieves slightly better precision and recall on retrieving high $\JW$ documents when the cost is low, even though this is the task $\sW$-MinHashes are designed for. The reason for this can be seen from the upper bound, $\JP \leq 2\JW/(1+\JW)$. On items that achieve this bound, the collision probability when summing two hashes, $(2x/(1+x))^2$, is similar to $4x^2$ near 0 and similar to $x$ near 1. This in effect gives it the recall of 1 hash with the precision of 2 hashes on this subset of items, and thus a better precision/recall trade-off overall.

\section{Conclusion}

We've described a new generalization of the Jaccard index, and shown several qualities that motivate it as the natural extension to probability distributions. In particular, we proved that it is optimal on all distributions in the same sense that the Jaccard index is optimal on uniform distributions. We've demonstrated its utility by showing $\JP$'s similarity in practice to the Jensen-Shannon divergence, a popular clustering criterion. We've described two MinHashing algorithms that achieve this as their collision probability with equivalent running time to the state of the art on both sparse and dense data.

\bibliography{Minhash}

\end{document}